\newtheorem{theorem}{Theorem}
\newtheorem{defn}{Definition}
\newtheorem{example}{Example}
\newtheorem{question}{Question}
\newtheorem{answer}{Result}
\newtheorem{remark}{Remark}
\newcommand{\eps}{\epsilon}
\newcommand{\Vhat}{\ensuremath{\hat{V}}}
\newcommand{\xhat}{\ensuremath{\hat{x}}}
\newcommand{\Xhat}{\ensuremath{\hat{X}}}
\newcommand{\R}{\mathbb{R}}
\title{{\footnotesize Forthcoming in Operations Research (2021)}\\Computing Large Market Equilibria Using Abstractions}
\author{%
  Christian Kroer* \\
  IEOR Department\\
  Columbia University
  \And
  Alexander Peysakhovich* \\
  Facebook AI Research
  \AND
  Eric Sodomka \\
  Facebook Core Data Science \\
   \And
  Nicolas E. Stier-Moses \\
  Facebook Core Data Science
}
\begin{document}

\maketitle

\begin{abstract}
Computing market equilibria is an important practical problem for market design, for example in fair division of items. However, computing equilibria requires large amounts of information (typically the valuation of every buyer for every item) and computing power. We consider ameliorating these issues by applying a method used for solving complex games: constructing a coarsened abstraction of a given market, solving for the equilibrium in the abstraction, and lifting the prices and allocations back to the original market. We show how to bound important quantities such as regret, envy, Nash social welfare, Pareto optimality, and maximin share/proportionality when the abstracted prices and allocations are used in place of the real equilibrium. We then study two abstraction methods of interest for practitioners: (1) filling in unknown valuations using techniques from matrix completion, (2) reducing the problem size by aggregating groups of buyers/items into smaller numbers of representative buyers/items and solving for equilibrium in this coarsened market. We find that in real data allocations/prices that are relatively close to equilibria can be computed from even very coarse abstractions.
\end{abstract}

\maketitle

\section{Introduction}

The problem of computing the equilibrium prices and allocation of a market economy has large informational and computational requirements \citep{hayek1945use,scarf1967computation}. In this work we apply the idea of abstraction to market equilibrium computation. We construct a simplified model of a market, referred to as an abstraction, solve for the equilibrium in the abstraction, and then lift the solution back to the original market. We derive analytic bounds for the error in the computed equilibrium as a function of abstraction quality, describe two methods of abstraction that can be useful in practical problems, and evaluate our approach on real datasets.

Computing optimal allocations subject to constraints has been a problem of interest since the inception of modern economic theory. Early applications included trying to use linear programming to plan the entire Soviet economy \citep{kantorovich1960mathematical,kantorovich1975mathematics}. Modern market designers, with slightly less grand visions, also use centralized algorithms to construct allocations with various properties such as incentive compatibility, envy freeness, stability, efficiency, and notions of fairness \citep{roth2002economist,klemperer2018auctions}. 

We will focus on the canonical case of Fisher markets. In Fisher markets we have a set of divisible items to be allocated and a set of individuals who may receive the items (our theoretical results also hold for the indivisible setting). Individuals have a total budget of money and valuations for each item. Formally, a market equilibrium consists of a price for each item and an allocation such that (1) individuals cannot improve their utility by using their budgets to purchase a different set of items (given the prices) and (2) the total demand for each item (i.e., the sum of the solution of each individual's maximization problem of allocating their budget) is equal to the supply. The equilibrium in Fisher markets that maximizes Nash social welfare can be computed by solving the \emph{Eisenberg-Gale (EG) convex program} \citep{eisenberg1959consensus}.\footnote{Nash Social Welfare is of independent interest as a welfare criterion as it ``exhibits an elusive combination of fairness and efficiency properties'' \citep{caragiannis2016unreasonable}.}

An important reason for computation of market equilibria in practice is the problem of eliciting valuations and assigning items to individuals in the absence of money. One such mechanism is known as \emph{competitive equilibrium from equal incomes} or commonly simply CEEI~\citep{varian1974equity}. Individuals are endowed with a budget of pseudo currency, give their valuations for items, an equilibrium of the resulting market is computed, and individuals receive the allocation from this equilibrium. This mechanism is known to have good properties: it is envy free,\footnote{When goods are not divisible, an approximate version is envy-free-up-to-one-good \citep{budish2011combinatorial}.} achieves a Pareto-optimal outcome, and is incentive compatible when the market is large \citep{budish2011combinatorial,azevedo2018strategy}. A version of CEEI is used in practice in course allocations at business schools \citep{budish2016bringing}.

While the course allocation setting has complex preferences and constraints, we are primarily motivated by applications where the assumptions of linear additive utilities are appropriate and where scales are large.\footnote{We also assume divisible items, but this can instead be thought of as lotteries over items being permitted rather than items themselves actually being divisible.} One may consider using Fisher markets in recommender systems. Here buyers are content creators and items are impressions. An advantage of Fisher market based allocation, rather than standard click-through rate maximization, is that it creates a more equal distribution of impressions across creators which is desirable in many cases. One example is the recommendation of jobs on online market places. There, click-through rate maximization can lead to congestion issues where some jobs receive many applicants and others receive few. Such an extreme outcome is bad for both the job posters and the job seekers.

As another example, it is known that in online advertising, Fisher markets with quasi-linear utilities are directly related to the budget-management problem faced by large internet platforms that sell impressions using paced auctions~\citep{conitzer2018multiplicative,conitzer2018pacing}.\footnote{The standard Fisher model assumes that leftover budgets are useless outside the market. This assumption can be removed by making the utility function quasi-linear. In the quasi-linear Fisher market, equilibria can still be computed using a convex program \citep{cole2017convex}. A full discussion of division mechanisms with real money is beyond the scope of this paper, though we will show that our main allocation results hold in both standard and quasi-linear Fisher markets.}
These systems are typically online (e.g., using auction-based pacing). In this setting, the static model can either be used to warm start pacing multipliers off past data (as these multipliers can be derived from the EG program~\citep{conitzer2018pacing}), or to run a posted price-based system where prices are updated by computing static markets from past demand and supply.
In addition, computation of static Fisher market equilibria is of interest even in the online case for ex-post market analysis and counter-factual reasoning (``what would happen to prices if supply of $X$ were to go up by $Y \%$?'') in such markets. 
An interesting future direction is to consider how to solve the Fisher-market convex program in an online fashion. In an online setting some of our proposed abstraction methods could be combined with an online algorithm directly, for example our low-rank market model.

Scaling market equilibrium computation to larger markets, such as internet-scale ones, requires solving two major challenges. First, one needs access to large amounts of information (e.g., every person's valuation for every item or item combination). Second, one needs access to enough computing power to solve the related convex program which is solvable in polynomial time \citep{nisan2007algorithmic} but still unwieldy for very large markets (e.g., thousands or millions of buyers). 

In this paper we propose a solution methodology for finding the equilibria of large markets using \textit{market abstractions}. We use the available details of a market instance to create a compact abstraction (a simplified model of the market), solve for the equilibrium in the abstraction, and then lift the solution to the original market. We provide a general set of results that allow us to bound various metrics of relevance that capture the distance to equilibria for the allocation and prices derived from the abstraction. This is the best that one can hope for large instances since computing the true equilibrium is computationally hard.
Some metrics for which our bounds are valid include regret, envy, Pareto optimality, and Nash social welfare.
We show that these results hold both for standard and quasi-linear Fisher markets.

We then turn to practice and focus on two abstraction methods of particular interest. The first uses techniques from matrix completion to infer valuations for person-item pairs that we may not have access to. We refer to this case as \textit{low-rank markets}. The second consists in reducing the size of the market (and thus the computational burden) by replacing groups of buyers and items with \textit{representative buyers} and \textit{representative items}, respectively. We solve for equilibria in the space of representative buyers/items and then split the allocation of each representative among the buyers/items it represents. We refer to this case as \textit{representative market equilibrium}. We show that this abstraction reduces the computational complexity and can be efficiently parallelized. We provide two lifting procedures to go from the abstract solution to the original market---a proportional and a recursive version---and discuss the tradeoffs between them.

We apply these abstraction methods, which can be used together, to real datasets including a novel one which we have collected, and evaluate the quality of solutions for various abstraction levels. 
We find that the equilibria found even in coarse abstractions have low errors in envy, regret, maximin share, distance to Pareto optimality, and Nash social welfare.


\section{Related Work}

The use of equilibrium assumptions to make estimates of deep `structural' parameters is of interest in both applied micro and macroeconomics \citep{berry1995automobile,ljungqvist2018recursive}. Often, the lack of individual-level data, or computing power, requires the use of representative agents (i.e., a single consumer that represents all individuals in the economy). To get around this issue, analysts typically make (strong) assumptions which imply that the equilibrium prices, aggregate decisions, and some structural parameters computed using a representative-agent stand-in are equivalent to the ones that would be derived from a model which includes all individual agents. In our work we are specifically interested in the situation where these assumptions are not true and the lifted answer from the abstraction does not yield the true equilibrium prices/allocation. There is increasing interest in using heterogeneous agent models in applied economic modeling \citep{hommes2006heterogeneous} and expanding our results to these cases is an interesting direction for future work.

Abstraction is an idea often used in the context of games. In two-player, zero-sum games (e.g., poker) a popular goal is to compute a Nash equilibrium strategy during training time and use it in actual play against opponents. In large games, however, it is impractical to solve for the Nash equilibrium of the original game and practitioners solve for the Nash equilibrium of an abstraction and then lift it to be a strategy for the original game. Abstractions often use heuristics and are hand tuned \citep{gilpin2007potential,waugh2009practical,ganzfried2014potential,brown2015hierarchical,brown2018superhuman} but they have more recently begun to be constructed automatically using function approximation (e.g., deep learning) \citep{moravvcik2017deepstack,brown2018deep}. In games, unlike in markets, the relationship between the quality of abstraction and the quality of the lifted strategy in the original game has been heavily studied \citep{lanctot2012no,kroer2014extensive,kroer2016imperfect,kroer2018unified}.

There is a literature on abstraction in non-market-based allocation problems \citep{walsh2010automated,peng2016scalable,lu2015value}. The scalability problem faced in the allocation setting is similar to ours,  but because the underlying optimization problem is very different (i.e., the maximization of allocative efficiency rather than a market equilibrium) the abstraction methods and results in these papers are quite different in character from ours.

There is a large literature on computing market equilibria in Fisher markets using convex programming \citep{eisenberg1959consensus,shmyrev2009algorithm} or gradient-based methods \citep{birnbaum2011distributed,nesterov2018computation}, typically via the \emph{Eisenberg-Gale convex program} (EG). There is also work extending the EG program to new settings such as quasi-linear utilities, indivisible items, or set-valued utility functions \citep{chen2007note,caragiannis2016unreasonable,cole2017convex,cole2018approximating,murray2020robust}. Our paper complements this existing work as our results are agnostic to the algorithm used for equilibrium computation; any of these algorithms can be employed for computing a market equilibrium in conjunction with our market abstraction model.

The literature on utilitarian welfare maximization in the context of market equilibrium is related to the quasilinear variant of the Fisher-market setting that we consider. In particular, that literature is motivated by applications to auctions, where market equilibrium with quasilinear utilities is a natural concept. 
Typically, this literature focuses on indivisible items, but does not allow budget constraints.
For example, \citet{baldwin2019understanding} study such a setting motivated by the auctions used for auctioning off loans during financial crises. 
\citet{leme2020computing} study fast oracle-algorithms for computing market equilibria in this setting when only an aggregate demand oracle is available.
\citet{bichler2019computing} study an exchange setting with utilitarian welfare maximization, where budgets are allowed. However, the problem of computing a market equilibrium in that setting turns out to be significantly harder than in our setting ($\Sigma^p_2$-hard in particular).
See \citet{bichler2020walrasian} for a survey of the general literature on market equilibrium with utilitarian welfare.
In this paper, we focus on the case of Fisher markets with budgets, where the EG convex program and its quasilinear extension naturally leads to market equilibrium. 
An interesting direction for future work would be to apply abstraction techniques to more general market equilibrium problems under utilitarian welfare maximization.

Another reason for computing market equilibria is to perform counterfactual reasoning in complex systems such as online advertising \citep{bottou2013counterfactual,peysakhovich2019robust}. For example, it is known that market equilibrium is also the equilibrium that results from budget smoothing in first and second-price auction markets such as those for online advertising \citep{borgs2007dynamics,conitzer2018multiplicative,conitzer2018pacing}. Thus, computing EG equilibria in large markets gives a way to answer counterfactual questions like: ``what would happen to prices if supply of some items changed by $10\%$?'' using only supply and demand information without having to simulate the actual rules of the market.

A related setting to ours is the one with linear additive utilities but indivisible items. In such settings, market equilibrium is not guaranteed to exist, but a closely related solution concept is to compute the discrete allocation that maximizes the EG objective (i.e., simply discretize the EG program). This computation is used in allocation mechanisms on Spliddit, an online application which helps groups of individuals allocate items fairly~\citep{goldman2015spliddit}, motivated by problems such as fairly splitting jewelry, artworks, or even an entire estate. The site also allows some of the goods to be divisible (e.g., cash or financial assets). Our theoretical results also apply to this setting, though our algorithmic results for generating smaller markets via clustering would require changes in order to yield a reasonable solution approach.


\section{Market Equilibrium in Fisher Markets}

In this paper we focus on the canonical \emph{Fisher market} setting with linear utilities. We have a set of $n$ buyers and a set of $m$ items (or goods; we will use the two terms interchangeably). Each item $j$ has a supply of $s_j$ units that are divisible. We denote by $X\in \R_+^{n\times m}$ the allocations of items to buyers where $x_{ij}$ refers to the share of item $j$ allocated to buyer $i$. We say an allocation is a full allocation if the sum of each column $j$ of $X$ is $s_j$. We use the convention that a vector $x_i$ corresponding to buyer $i$ and denoted in lower case is the $i$-th row of $X$, and use the shorthand $[n] = \{1,\ldots, n\}$ to denote the index set of buyers or items.

Each buyer $i$ is endowed with a budget $B_i$. We denote by $V \in \R_+^{n\times m}$ the matrix of values with $v_{ij}$ being the value of buyer $i$ for item $j$ and $v_i$ the vector of values for buyer $i$.  For a given allocation $X$ we assume that total values are additive and linear, i.e., 
$$u_i(x_i) = v_i \cdot x_i = \sum_{j} v_{ij} x_{ij}.$$
We assume throughout that each item $j$ has at least one buyer $i$ such that $v_{ij}>0$, and each buyer $i$ has at least one item~$j$ such that $v_{ij}>0$. Note that if this does not hold then we may preprocess away such degenerate buyers and items.

\begin{defn}
Given (linear, anonymous) prices $p \in \R_+^m$ for the items, the \emph{demand} set for buyer $i$ is 
  $$d_i (p) = \lbrace \arg \max_{x_i \in \R_+^m} v_i \cdot x_i : p \cdot x_i \leq B_i, x_i \leq s\rbrace.$$ 
\end{defn}
Note that in our definition of demand computation each buyer is \emph{supply-aware}: they require $x_i\leq s$ in their computation. This is not crucial for any of our results and we could also remove that constraint from the definition of $d_i(p)$. Including it is motivated by the fact that supply-aware buyers allow market equilibria to capture budget-smoothing in second-price auction markets~\citep{conitzer2018multiplicative}. 

\begin{defn}
A set of prices and an allocation $(p^*, X^*)$ is a \emph{market equilibrium} if $x^*_i \in d_i(p^*)$ for all $i\in[n]$ and $X^*$ is a full allocation.
\end{defn}

A market equilibrium is, in general, hard to compute \citep{chen2009settling,othman2016complexity}, even in the piecewise-linear Fisher case~\citep{chen2009spending}. Furthermore markets may have multiple equilibria in the case of supply-aware buyers. However, in linear Fisher markets a particular market equilibrium is computable using the Eisenberg-Gale (EG) convex program \citep{eisenberg1959consensus}:

\begin{equation*}
\begin{aligned}
& \underset{X\in \R_+^{n\times m}}{\text{max}}
& & \sum_{i} B_i \text{log} (v_i \cdot x_i) \\
& \text{subject to}
& & \sum_{i} x_{ij} \leq s_{j} \text{ } \forall j.
\end{aligned}
\end{equation*}

This convex program has no incentive constraints and no prices, so it may be surprising that it is somehow related to a market equilibrium. However, it can be shown that the optimal solution $X^*$ for this convex program is a market equilibrium allocation. The corresponding prices are the dual variables associated with each of the supply constraints.

An attractive feature of the EG equilibrium is that it tells us exactly what the equilibrium outcome actually maximizes---the geometric mean of individual utilities weighted by budgets---i.e., the Nash social welfare when budgets are $1$~\citep{varian1974equity,caragiannis2016unreasonable}. For this reason the EG solution is also referred to as the \emph{max Nash welfare} (MNW) allocation.

The equilibrium found by the EG procedure has a useful structure: for every item that an individual is receiving in their equilibrium allocation they receive equal `bang per buck.' Formally, in the EG equilibrium $(p^{*}, X^{*})$ $$\text{if } x^{*}_{ij}, x^{*}_{ij'} > 0 \text{ then } \dfrac{v_{ij}}{p^{*}_{j}} = \dfrac{v_{ij'}}{p^{*}_{j'}}.$$
This also corresponds to the unique equilibrium in the case where buyers are not supply-aware.
From the convex program it can be shown that the set of EG prices and per-buyer utilities are unique. On the other hand, there may be multiple associated equilibrium allocations, for example, if buyers view two items as perfectly interchangeable. 

\subsection{Quasi-Linear Fisher Markets}

So far we have considered Fisher markets where money has no intrinsic value. These are useful for fair division applications such as the computation of CEEI where the `money' allocated to each individual is indeed useless outside the mechanism. Another type of Fisher market is a \textit{quasi-linear Fisher market} where leftover budget is useful outside of the market. For example, if we are considering the case of online ad markets, any unspent budget is refunded to the buyers and can be used in outside purchases. Formally, this changes the utility function to depend on prices as follows:
$$u_i(x_i, p) = v_i \cdot x_i + (B_i - x_i \cdot p).$$
The definition and existence of equilibrium prices and allocations can be derived just as above. In addition, it is known that such quasi-linear Fisher market equilibria can also be found via a slightly modified version of the EG convex program \citep{cole2017convex}.

To keep notation simple, we will focus on the standard Fisher market case for the theory section of this paper. However, in the Appendix we show that all of the results in terms of bounds on the error introduced by abstraction readily extend to the case of quasi-linear Fisher markets. In addition, we perform an experiment in the quasi-linear setting and show that, just as in the standard Fisher setting, abstractions give good results in practice.


\section{Abstractions of Markets}
We are interested in settings where we do not have full access to the valuation matrix $V$.
Let us call the original market $M = (V,B,s)$.
Instead of $M$, we consider some smaller market $\tilde M = (\tilde V, \tilde B, \tilde s)$ which is an abstraction of $M$. The number of buyers and items in $\tilde M$ may be smaller than that of $M$. 
A market equilibrium $(\tilde p, \tilde X)$ is then computed for the abstraction $\tilde M$.
Finally, we specify a \emph{lifting procedure} that generates a set of prices $\hat p$ and an allocation $\hat X$ for $M$, based on the abstraction solution $(\tilde p, \tilde X)$. This lifting procedure also produces a valuation matrix $\hat V$ of the same dimensions as $V$, such that $(\hat p, \hat X)$ is an equilibrium under the market $\hat M = (\hat V, B, s)$. The purpose of the matrix $\hat V$, which encodes the abstraction and lifting procedures, is that we will use it to derive bounds on the solution quality of $(\hat p, \hat X)$ in terms of $V$.

To be concrete, the abstract market $\tilde M$ could arise from a clustering of the buyers and items: every buyer is assigned to a representative buyer, and every item is assigned to a representative item. In that case, the approximate valuation matrix $\hat V$ would be such that each value $\hat v_{ij}$ is equal to the value $\tilde v_{\tilde i,\tilde j}$ of the corresponding representative buyer $\tilde i$ and representative item $\tilde j$.

Given this setup, our main question is:
\begin{question}[Main Question]
If $(\hat{p}, \hat{X})$ is an equilibrium with respect to $\hat{V}$ what can be said about it relative to the true valuation matrix $V$?
\end{question}

\begin{defn}
We define the abstraction error as $\Delta V = V - \hat{V}$. We use $\Delta v_i$ to denote $v_i - \hat v_i$.
\end{defn}

Throughout this section and the section on approximation results we assume that $s_j=1$ for all $j$. This is only for ease of notation, and is without loss of generality as valuations and prices can be rescaled to fit the assumption. The next section will be focused on showing a rather technical and general theorem which implies the following results:

\begin{answer}[Main Results, Informal Statement] 
For commonly used evaluation metrics of candidate solutions such as regret, Nash social welfare, envy, Pareto optimality, and maximin share, the error from using $\hat{V}$ instead of $V$ can be bounded using various matrix norms of $\Delta V$.
\end{answer}

Individual bounds will largely be in terms of $\|\Delta v_i\|_1$, the $\ell_1$-norm of the change in values for buyer $i$. Bounds over all agents will mostly use the $\ell_1,\ell_{\infty}$-norm for matrices, but where the $\ell_1$ part is applied to rows rather than columns: $$\|\Delta  V\|_{1,\infty} = \max_{i \in [n]} \|\Delta v_i\|_1.$$
Note that the standard definition of this norm is to apply it to columns, not rows. 

Again, we note that while this paper primarily focuses on the divisible Fisher market setting, our results in this section all carry over to the indivisible setting as well as to quasi-linear Fisher markets.

\subsection{LQFS Properties Under Market Abstractions}

We now define a general class of properties of allocations and prices, referred to as \emph{linear and quantified over a fixed set} (LQFS), that will be approximately preserved under abstractions. Our definition is a generalization of the \citet{balkanski2014simultaneous} notion of linear properties in fair division mechanisms (see Appendix~\ref{sec:lin props} for why the \citet{balkanski2014simultaneous} definition does not give us what we want). 

  We give an abstract definition of a property that we use to prove a general approximation result. We then instantiate it for several types of guarantees such as no envy, which compares the current allocation across agents, and global properties such as the potential for Pareto improvement.
  The intuition is relatively simply: we wish to capture properties such that if the property holds in the abstracted market with valuation matrix $\hat V$, then the error in the property when measured under $V$ is linearly related to the error $\Delta V$. This will more-or-less follow from the linearity of utilities.
  However, a second crucial concern is the set of comparisons that need to be checked, in order for the property to hold. Intuitively, this set of comparisons should remain invariant under changes to the valuation matrix $V$. 

\begin{defn}
  A property $k$ of an arbitrary tuple of valuations, prices, and allocations $ (V, p, X)$ is per-buyer LQFS (linear and quantified over a fixed set) if it can be represented as follows: We are given some set of allocations $\mathcal X^k$ that we compare to, a quantifier $Q \in \{\exists, \forall\}$, weights $\lambda_1^k, \lambda_2^k \geq 0$, and sets of buyers $I_i$ for each $i\in [n]$. We then ask that for all $X' \in \mathcal X^k$, $Qi \in [n]$ (i.e., either for all $i$, or there exists $i$), we have
\begin{align}
  \label{eq:min_sublinear_prop}
   u_i(x_{i}) \geq \lambda_1^k \max_{i'\in[n]} u_i(x_{i'}) +  \lambda_2^k \min_{i' \in I_i} u_i(x_{i'}'),
\end{align}
where $I_i$ is the set of buyers we wish to compare $i$'s allocation to.

A property $k$ of $ (V, p, X)$ is global LQFS if, given $w_i^k \geq 0$ for each $i\in [n]$, it can be represented as follows
\begin{align}
  \label{eq:lqfs global}
   \sum_{i\in [n]} w_i^k u_i(x_{i}) \geq \sum_{i\in [n]} w_i^k u_i(x_i') \ \forall X' \in \mathcal{X}^k.
\end{align}

In both cases, $\mathcal{X}^k$ is a subset of all supply-feasible allocations, which is allowed to depend on $p$ and $X$, but not $V$. 
\end{defn}

Equation~\eqref{eq:min_sublinear_prop} compares the utility of buyer $i$ under the current allocation and prices to various other allocations:
the first term on the right-hand side with weight $\lambda_1^k$ is used to measure the extent to which buyer $i$ likes the bundles assigned to other buyers under $p,X$; the second term with weight $\lambda_2^k$ measures how much buyer $i$ improves under the alternative allocation $X'$, if forced to take the worst allocation in the set $I_i$. Practically speaking, $I_i$ is typically either equal to $[n]$ or $\{i\}$.

The goal will be to show that each LQFS property on $(\hat V,\hat p, \hat X)$ is an approximate LQFS property on $(V, \hat p, \hat X)$.
Crucially, the sets $\mathcal{X}^k_i$ in per-buyer LQFS properties and $\mathcal X^k$ in global LQFS properties will remain the same both in the abstraction and in the original market since they do not depend on $V$; for most properties we examine, this set is either a singleton, the set of all supply-feasible allocations, or the set of budget-feasible allocations under $p$.

We now show our main technical result:

\begin{theorem}\label{lqfsworks}
  LQFS properties are approximately preserved under abstraction in the following sense: 
  
  Consider some particular per-buyer LQFS property $k$ which is satisfied under $(\hat{V},\hat{p},\hat{X})$.
  Then the property is approximately satisfied for $(V,\hat{p},\hat{X})$. In particular, for all $X' \in \mathcal X^k$, we have that  $Qi \in [n]$:
  \begin{align*}
   u_i(\hat x_{i}) \geq \lambda_1^k \max_{i'\in [n]} u_i(\hat x_{i'}) +  \lambda_2^k \min_{i' \in I_i} u_i(x_{i'}') - \max(1, \lambda_1^k +\lambda_2^k) \|\Delta V\|_{1,\infty}.
  \end{align*}

  Consider some global LQFS property $k$, quantified over a set of allocations $\mathcal{X}^k$ which is satisfied under $(\hat{V},\hat{X},\hat{p})$.
  Then the property is approximately satisfied for $(V,\hat{p},\hat{X})$. In particular,
  \begin{align*}
   \sum_{i\in [n]} w_i^k u_i(\hat x_{i}) \geq \sum_{i\in [n]} w_1^k u_i(x_{i}') - \sum_{i \in [n]}w_i^k \|\Delta v_i\|_{1}, \ \forall X' \in \mathcal{X}^k.
  \end{align*}

  \label{thm:lqfs}
\end{theorem}

The proof is given in the Appendix.

\subsection{Which Properties Are LQFS?}
We now show that a large number of market equilibrium properties are in fact LQFS properties and thus they are approximately preserved under abstraction.

One notion of the quality of a pair $(p,X)$ of prices and allocation is the induced regret. That is, how close is $\hat{x}_i$ to actually being a demand vector given the prices $\hat{p}$? Here we measure regret with respect to the supply constraints when the buyer can buy all items for themselves. Formally the regret of a buyer under a solution $(\hat{p}, \Xhat)$ is
\[\text{Regret}_i (\hat{p}, \hat{X}) = u_i(d_i(\hat p))  - u_i(\xhat_i).\]
By definition, in equilibrium, regret is $0$.

Another important metric of allocation quality is envy \citep{varian1974equity,budish2011combinatorial,caragiannis2016unreasonable}. The envy for buyer $i$ is the amount by which they prefer any other buyers' allocation over their own. Importantly, when buyers have the same budget, equilibrium allocations are envy free. This is a major reason for the popularity of CEEI as an allocation mechanism. Formally, the envy is
\[\textrm{Envy}_i(\hat X) = \max_{i'\in [n]} u_i( \hat x_{i'}) - u_i(\hat x_i).\]

Finally, we look at the maximin share (MMS) guarantee. MMS is the value that buyer $i$ would obtain if they were allowed to split the allocation $X$ into $n$ bundles, but were then assigned their least-valued bundle $\min_{i'}u_i(x_{i'})$. MMS was introduced by \citet{budish2011combinatorial}, and it generalizes the \emph{proportional-share} notion of divisible assignment to the indivisible setting. In proportional-share assignment, each buyer $i$ is required to receive an allocation that they like at least as much as the one where they receive a fraction $\frac{1}{n}$ of each item. Since MMS generalizes fair share our result below applies to both fairness notions.

Formally, the MMS guarantee of buyer $i$ is
\[
  \textrm{MMS}_i = \max_{x \in X} \min_{i' \in [n]} u_i(x_{i'}).
\]

We say that the \emph{MMS gap} for buyer $i$ in allocation $x$ is
\[\textrm{MMS gap}_i(\Xhat)=\max (0, \textrm{MMS}_i - u_i(\xhat_i)).\]

Often in market design the goal is to maximize social welfare, i.e., the sum of buyer utilities. However, in general, a market equilibrium may not lead to a social-welfare-maximizing solution. This is easy to see by considering the solution to the problem of maximizing geometric-mean utility: it is a market equilibrium, but it will never set any buyer utility to zero, even when that is  the only way to maximize social welfare. 

Instead, the usual efficiency criterion for market equilibria is that they are Pareto optimal (see, e.g., \citet{varian1974equity,budish2011combinatorial,caragiannis2016unreasonable}). We focus specifically on ability to strongly Pareto improve an allocation $\hat{X}$: for any Pareto-improving allocation $X'$, we say that its \emph{strong Pareto improvement} is the utility gain of the buyer who gained the least. For a market equilibrium, we are guaranteed that there does not exist any Pareto-improving allocation, and thus there is also no strongly-improving allocation.

While market equilibrium has no guarantees regarding social welfare, it can nonetheless be related to \emph{weighted} social welfare maximization. Negishi's theorem~\citep{negishi1960welfare} says that given a market equilibrium, there exists a set of utility weights $\{\beta_i\}$ specifying the utility weight of each buyer such that the market-equilibrium allocation is the allocation that maximizes weighted social welfare (in Fisher markets these weights are simply the inverse bang-per-bucks). 

Our next result shows that each of these commonly studied properties falls into the LQFS class. 

\begin{theorem}
Regret, envy, MMS, proportional share, and strong Pareto improvement are per-buyer LQFS properties. Weighted social welfare maximization is a global LQFS property.
\end{theorem}

\begin{proof}
  Throughout the proof, we let $(p,X)$ be the allocation and prices satisfying a given property for $V$. 

  For regret, we instantiate LQFS with the comparison set $\mathcal X^k$ equal to all supply-feasible allocations that are budget-feasible for every buyer under $p$,  quantifier $Q=\forall$, weights $\lambda^k_{1}=0, \lambda^k_{2}=1$, and $I_i = \{i\}$.
  This yields that for all $i$ and all supply-feasible allocations $X'$ such that $x_{i'}'\cdot p \leq B_{i'}$, we have the inequality (note that this includes supply and budget-feasible allocations which allocate only to $i$)
  \[
    u_i(x_i) \geq u_i(x_{i}').
  \]

  For envy, we instantiate LQFS with comparison set $\mathcal{X}^k = \{X\}$,  quantifier $Q=\forall$, weights $\lambda^k_{1}=1, \lambda^k_{2}=0$, and comparison set $I_i=\emptyset$. 
  This yields that for all $i$, we have the inequality
  \[
    u_i(x_i) \geq \max_{i'\in [n]} u_i(x_{i'}).
  \]

  For MMS and proportionality, we instantiate LQFS with the comparison set $\mathcal X^k$ equal to the set of all  supply-feasible allocations,  quantifier $Q=\forall$, weights $\lambda^k_{1}=0, \lambda^k_{2}=1$, and $I_i = [n]$. This instantiation gives MMS when goods are indivisible, and proportionality when goods are divisible. In particular, we get that for all $i$ and every supply-feasible allocation,
  \[
    u_i(x_i) \geq \min_{i'\in [n]} u_i(x_{i'}').
  \]
  In the divisible case the proportional allocation $x_i = \frac{1}{n}$ is included in this set, and every other supply-feasible allocation has some $i'$ with lower $u_i(x_{i'}')$ than in the proportional allocation. In the MMS case, this is exactly the same as the MMS definition. 

  For strong Pareto improvement, we instantiate LQFS where the comparison set $\mathcal X^k$ is the set of all supply-feasible allocations,  quantifier $Q=\exists$, weights $\lambda^k_{1}=0, \lambda^k_{2}=1$, and $I_i = \{i\}$.
  This guarantees that for every supply-feasible allocation $X'$, there exists some $i$ such that
  \[
    u_i(x_i) \geq u_i(x_i'),
  \]
  which is equivalent to saying that there is no strong Pareto improvement.

  For weighted social welfare maximization with weights $\beta \in \mathbb R_+^n$, we instantiate global LQFS with the comparison set $\mathcal X^k$ equal to the set of all supply-feasible allocations and weights $w^k_i = \beta_i$ equal to the weights we use in our weighted social welfare objective.
  This gives that for all $X'$ we have
  \[
    \sum_i w^k_i u_i(x_i)  \geq \sum_i w^k_i u_i(x_i'),
  \]
  which is exactly the guarantee needed for maximizing weighted social welfare.
\end{proof}

Using the fact that each of our measures of interest is an LQFS property, the main result follows immediately from Theorem \ref{lqfsworks}.
First we cover the per-buyer LQFS properties.

\begin{theorem}
  Let $(\hat{p},\hat{X})$ be equilibrium allocations and prices computed under $\hat{V}$. We can give the following bounds for each of the properties under the true valuation matrix $V$:
\begin{enumerate}
\item The regret, envy, and MMS gap for each buyer is at most $\| \Delta V \|_{1, \infty}$ 
\item Any strong Pareto improvement from $\hat{X}$ is bounded by $\| \Delta V \|_{1, \infty}.$
\end{enumerate}
\end{theorem}

Since weighted social-welfare maximization is a global LQFS property, we get the following.
\begin{theorem}[Negishi Welfare is Preserved Under Abstraction]
Let $\{\beta_i\}$ the Negishi social-welfare weights for the equilibrium under $\hat{V}$. Then $\Xhat$ solves the problem of maximizing social welfare for $V$ under weights $\{\beta_i\}$ up to an additive error of at most 
$\sum_{i \in [n]}\beta_i \|\Delta v_i\|_{1} \leq \|\beta\|_1\|\Delta V\|_{1,\infty}$.
\end{theorem}

\subsection{Properties that are not LQFS}

Another global optimality condition that has gained recent interest from the perspective of fairness is Nash social welfare \citep{caragiannis2016unreasonable}. The Nash social welfare (NSW) is the product of buyer utilities, formally
\[
  \text{NSW}(X) = \prod_{i\in [n]} u_i(x_i).
\]
NSW is appealing as compared to regular social welfare, since it ensures that we must give all buyers a reasonable amount of utility, or they will cause the whole objective to be small.
In fact, the EG program computes the solution that maximizes NSW, so market equilibria themselves are exactly the set set of NSW maximizers.

While NSW is not an LQFS property, we can show that it is approximately preserved under abstraction:
\begin{theorem}[NSW is Preserved Under Abstraction]\label{thm:nsw}
  \label{thm:nsw}
  The NSW of the optimal EG solution $\Xhat$ under $\Vhat$ upper bounds the NSW of the optimal solution $X^*$ to the original EG problem as follows:
  \[
    \text{NSW}(X^*) \leq \prod_{i \in [n]}\bigg(1 +  \frac{\|\Delta v_i\|_1}{\hat u_i(x_i^*)}\bigg) \text{NSW}(\xhat).
  \]
  This assumes $\hat u_i(x^*_i)>0$ for all $i$, i.e., the difference between $V$ and $\Vhat$ should be such that the original optimal solution still has nonzero value under $\Vhat$.
\end{theorem}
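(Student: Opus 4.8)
The plan is to combine two ingredients: a per‑buyer, per‑allocation form of Lemma~\ref{lem:valuation linearity} — any feasible allocation's true value for buyer $i$ overestimates its abstracted value by at most $\|\Delta v_i\|_1$ — and the observation that $\Xhat$, being the optimal (Eisenberg--Gale / max‑Nash‑welfare) allocation for $\Vhat$, is by definition the maximizer of $\prod_i \hat u_i(x_i)$ over feasible full allocations. Because NSW is a product rather than a sum, the per‑buyer additive slack $\|\Delta v_i\|_1$ turns into a multiplicative factor $1+\|\Delta v_i\|_1/\hat u_i(x^*_i)$ once we factor $\hat u_i(x^*_i)$ out, which is exactly why this theorem has a multiplicative rather than additive form.

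First I would bound $\text{NSW}(X^*)$ buyer by buyer. Writing $u_i(x^*_i)=\hat u_i(x^*_i)+\Delta v_i\cdot x^*_i$ and using the same $\ell_1$--$\ell_\infty$ Hölder bound that underlies Lemma~\ref{lem:valuation linearity} (allocation entries lie in $[0,1]$, so $\Delta v_i\cdot x^*_i\le\|\Delta v_i\|_1$), one gets $u_i(x^*_i)\le \hat u_i(x^*_i)+\|\Delta v_i\|_1$. Since the hypothesis $\hat u_i(x^*_i)>0$ lets us factor, this equals $\hat u_i(x^*_i)\bigl(1+\|\Delta v_i\|_1/\hat u_i(x^*_i)\bigr)$. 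All quantities being nonnegative, taking the product over $i\in[n]$ yields
\[
  \text{NSW}(X^*)=\prod_i u_i(x^*_i)\ \le\ \Bigl[\textstyle\prod_i\bigl(1+\|\Delta v_i\|_1/\hat u_i(x^*_i)\bigr)\Bigr]\prod_i \hat u_i(x^*_i).
\]

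Second I would invoke optimality of $\Xhat$ for the abstracted market: since $\Xhat$ maximizes the product of $\Vhat$‑utilities over feasible full allocations and $X^*$ is such an allocation, $\prod_i \hat u_i(x^*_i)\le \prod_i \hat u_i(\hat x_i)=\text{NSW}(\xhat)$ (the Nash welfare of the lifted allocation measured under $\Vhat$), and substituting into the display above gives the claim. This is more bookkeeping than deep obstacle; the one genuinely load‑bearing step is recognizing that the equilibrium $\Xhat$ \emph{is} the $\Vhat$‑Nash‑welfare maximizer, which is what licenses replacing it by $X^*$ inside $\prod_i\hat u_i(\cdot)$. The points to handle with care are: keeping every multiplicative term well defined, which is precisely the role of the assumption $\hat u_i(x^*_i)>0$ (it makes the per‑buyer factors and the division meaningful and the compared products positive); using the error in the correct direction (we need the \emph{over}estimate $u_i\le\hat u_i+\|\Delta v_i\|_1$, valid because $\|x^*_i\|_\infty\le 1$); and noting that the $\text{NSW}(\xhat)$ on the right is evaluated under $\Vhat$, with the true‑valuation Nash welfare of $\xhat$ differing from it again only by per‑buyer factors of order $1\pm\|\Delta v_i\|_1/\hat u_i(\hat x_i)$ should one wish to restate it under $V$.
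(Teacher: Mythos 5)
Your proposal is correct and takes essentially the same route as the paper's proof: factor each buyer's error into a multiplicative term $1+\|\Delta v_i\|_1/\hat u_i(x^*_i)$ via the bound $\Delta v_i\cdot x^*_i\le\|\Delta v_i\|_1$, and use the fact that $\Xhat$ maximizes the $\Vhat$-Nash welfare to replace $\prod_i\hat u_i(x^*_i)$ by $\text{NSW}(\xhat)$. The only difference is that you apply the H\"older step before the optimality step while the paper does the reverse, which is immaterial; your remark that $\text{NSW}(\xhat)$ is measured under $\Vhat$ matches the paper's final expression.
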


In contrast to the various properties above, the improvement to social welfare in any Pareto-improving allocation is not bounded.  Consider the following counterexample:
\[
  V =
  \begin{bmatrix}
    1 & \epsilon & \epsilon\\
    0 & 1 & \epsilon
  \end{bmatrix}, \quad
  \hat V =
  \begin{bmatrix}
    1 & \epsilon & 0\\
    0 & 1 & \epsilon
  \end{bmatrix}, \quad
  B_1 = B_2 = 1, \quad
  s_1=s_2=s_3=1.
\]
In the abstracted valuations $\hat V$ the unique market equilibrium has prices $\hat p = (0,2,0)$ and allocates item 1 to buyer 1, item 3 to buyer 2, and splits item 2 in half among the buyers. But now, in spite of $\|\Delta V\|_{1,\infty}=\epsilon$, we get that we can improve social welfare by $\frac{1}{2}-\epsilon$ under a Pareto-improving allocation by splitting item 3 instead of item 2.

The reason can be seen through the lens of LQFS properties. Maximizing social welfare over Pareto-improving allocations can be thought of as  solving the linear program $\max_{x'} \{\sum_i u_i(x_i') : u_i(x_i') \geq u_i(x_i)\ \forall i, \sum_i x_{ij}' \leq 1\ \forall j\}$, where $x_i$ is the allocation of buyer $i$ in the allocation we are trying to Pareto improve. If $\hat X$ is Pareto optimal under $\hat V$ and the utility functions in the constraints are defined with respect to $\hat V$, then this can be thought of as quantifying over the set of allocations satisfying the constraints in the LP, and indeed we are guaranteed that the objective at $\hat X$ is higher than for any other allocation in this quantification (under $\hat V$).
But when we use $V$ to define the utility functions in the constraints, we change the set of allocations we compare to (in other words, the quantification is not fixed when moving from $\hat V$ to $V$).

\subsection{Tightness of Bounds}
We now show that our bounds are tight. Consider the following market:

\begin{example}
There are $n\geq 2$ buyers and $nm$ items with supply equal to one, with $m \geq 1$. Each buyer $i$ has value $1$ for every item, except items $(i-1)m+1,\ldots,im$, for which they have value $1+\epsilon$. Each buyer has budget $m$. Now consider an abstraction $\hat V$ where every buyer has value $1$ for every item, this abstraction has $\|\Delta V\|_{1,\infty}= m\epsilon$, since every buyer has $m$ items for which their utility is lowered by $\epsilon$. Under $\hat V$, every market equilibrium has all prices equal to $1$, and any allocation that allocates exactly $m$ units of items to every buyer is a corresponding market equilibrium allocation. Consider the allocation where buyer $i$ gets all of items $im+1,\ldots (i+1)m$ (with buyer $n$ getting the first $m$ items); this is a market equilibrium allocation under $\hat V$ with all prices set to $1$. But now $i$ envies buyer $i-1$ by $m\epsilon$, and has regret $m\epsilon$. Furthermore, by rotating the allocation we can improve every buyer's utility by $m\epsilon$, and improve Negishi-weighted social welfare by $\|\beta\|m\epsilon$ (note that the Negishi weights must all be the same in this market). This example works for any $n\geq 2,m\geq 1$, and thus shows asymptotic tightness. 
\end{example}


\section{Abstractions in Practice}

The two major obstacles to computing equilibria in practice are information requirements (needing to know every element of $V$) and computation requirements. We now describe two techniques---matrix completion and the use of representative buyers/items---that can be used to reduce these burdens. We discuss how they fit into the framework of abstractions, and relate them to our abstraction bounds derived above. Figure~\ref{applied_abstractions} summarizes the two abstraction approaches.

\begin{figure}[H]
\begin{center}
\includegraphics[scale=.4]{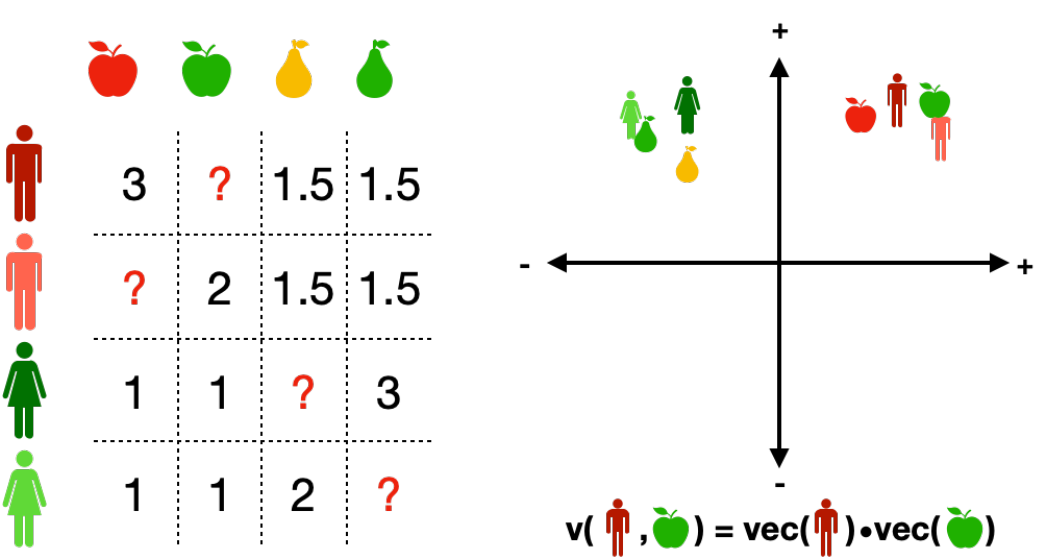}
\hspace{25px}
\includegraphics[scale=.35]{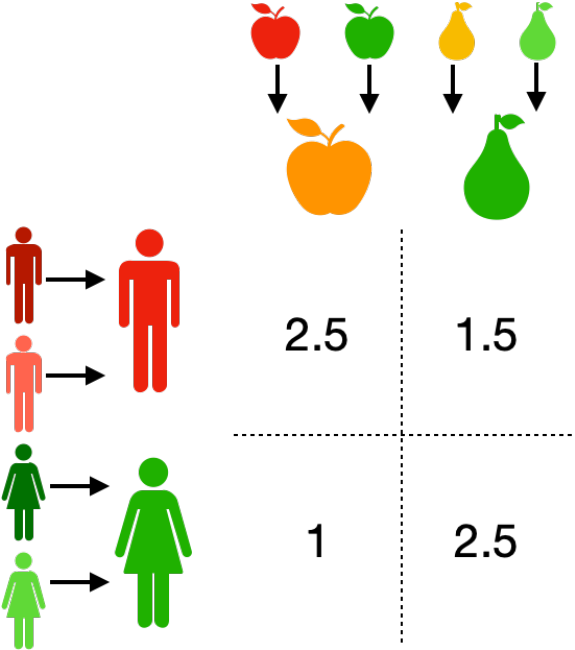}

\end{center}
\caption{Two abstraction methods that can be used together to reduce information and computational constraints. Left: Low-rank matrix completion uses observed data to discover a latent vector for each item and buyer. Unobserved valuations are approximated using the dot product of these vectors. Right: Representative buyer/item abstractions collapse multiple buyers and/or multiple items into single representative agents.}
\label{applied_abstractions}
\end{figure}

Recall that most of the bounds (with exception of the Nash social welfare) related the quality of the equilibrium $(\hat{p}, \hat{X})$ to $\|\Delta V\|_{1,\infty} $. This norm is computationally difficult to work with in practice since it is non-smooth. Instead, we will consider a much better understood matrix norm for practical applications, the Frobenius norm, which is defined as $\| \Delta V \|_{F} =(\sum_{i,j}\Delta v_{i,j}^2)^{1/2}$. 

Using the Frobenius norm implies that we have to settle for a less tight bound given by:
\[\|\Delta V\|_{1,\infty} = \max_{i \in [n]} \|\Delta v_i\|_1 \leq \max_{i \in [n]} \sqrt{m}\|\Delta v_i\|_2 \leq \sqrt{m} \|\Delta V\|_{F}.\]

Using the Frobenius norm gains us the ability to use well-known techniques to construct our abstractions. We will use low-rank matrix completion to learn missing entries and here it is well known that if $\hat{V}$ is the optimal $k$-rank approximation to $V$ then $\| \Delta V \|_{F} = (\sum_{i=k}^{\min(n,m)} \sigma^2_i)^{1/2}$ where $\sigma_i$ is the $i^{th}$ singular value of the original matrix~$V$---that is, the Frobenius norm is related to the `cut off' singular values of the matrix. 

Furthermore, if our matrix completion is being done from samples (as is common in many online applications) we can use a hold-out test set to estimate $\| \Delta V \|_{F}.$

In addition to rank-$k$ approximation, we will also use $k$-means clustering to reduce the size of $V$; this technique can also be viewed as a way of finding an approximation that minimizes the Frobenius norm~\citep{bauckhage2015k}.

\begin{remark}
  While our bound tells us about the quality of the allocations and prices, it tells us little about how close the values themselves are to the exact solution. It is known that the equilibrium $(p,X)$ is continuous in $V$ \citep{megiddo2007continuity}. Thus, we can make the allocation and prices arbitrarily close by making $\|\Delta V\|$ sufficiently small. However, unlike quantities like regret, envy, etc., there is no linear relationship between $\|\Delta V\|$ and how far the abstracted price vector and allocation matrix is from the equilibrium under $V$. This is an interesting area for future work. 
\end{remark}
 
\subsection{Low-Rank Matrix Completion}
We begin with the information problem, that is, the case where not all entries of $V$ are known to us. Without some imputation of the missing entries, we cannot compute an equilibrium. If every $v_{ij}$ is completely independent and unpredictable from other valuations of the same buyer or other valuations of the same item by other buyers, then we have no hope of filling in missing entries in a sensible way. However, in most real world situations this is likely not the case and there is shared information across entries of the valuation matrix. To fill in the missing entries we can use standard techniques from matrix completion (e.g., \citealp{recht2011simpler}).

A standard method for matrix completion is as follows: given a set of observations $\mathcal{O}$ of elements from a matrix $V$ with generic element $v_{ij}$, we try to find a set of vectors $vec(i) \in \mathbb{R}^d$ for every buyer $i$ and $vec(j) \in \mathbb{R}^d$ for every item $j$ to solve $$\min_{vec} \sum_{v_{ij} \in \mathcal{O}} (v_{ij} - vec(i) \cdot vec(j))^2.$$ After we fit this model, we can construct a now complete matrix $\hat{V}$ of the original dimensionality with $\hat{v}_{ij} = vec(i) \cdot vec(j)$ and use this in place of $V$ for our task of interest. The solution ends up effectively minimizing $\min_{\hat{V}} \| V - \hat{V} \|_{F}$  over rank $d$ matrices $\hat{V}$.\footnote{This formulation is non-convex and thus does not have guarantees. However, it is well-known to perform well in practice, and it is much more scalable than the convex relaxation \citep{ge2016matrix,bhojanapalli2016global}} Note that matrix completion is not guaranteed to give positive $\hat{v}_{ij}$ even if all observed $v_{ij}$ are positive, so we replace all $\hat{v}_{ij} \leq .001$ with $.001.$ 

\subsection{Representative Market Equilibrium}

The issue of computation constraints when solving for market equilibrium arises because the EG program grows linearly in the size of the matrix $V$. This can lead to issues both with computation time, and especially memory usage. 
If an abstraction can reduce the number of buyers (and/or items), it can make the computation of equilibrium much more efficient. Given a valuation matrix $V$ we consider abstracting the market as follows.

\subsubsection{Proportional Lift}

\begin{algorithm}
  \caption{Representative Market Equilibrium}\label{alg:repr}
\begin{algorithmic}
  \State \text{Input:} Set $\mathcal{O}$ of known valuations with generic element $o_{ij}$ (not necessarily the full matrix), $\hat n$, $\hat m$.
\Procedure{Construct Representatives}{}
\State If matrix is fully available, use the row/columns as the vectors
\State Else solve for vectors for each buyer $\text{vec}(i)$ and item $\text{vec}(j)$ 
\State Use $k$-means clustering on the buyer vectors with $\hat{n}$ centroids
\State Use $k$-means clustering on the item vectors with $\hat{m}$ centroids
\State Use centroids of $k$-means as vector representations of representatives
\State Set representative market valuations as dot products of above vectors
\EndProcedure

\Procedure{Compute Representative Supplies/Budgets}{}
\State Sum budgets of buyers assigned to representative $i$ by $k$-means to get budget
\State Sum supplies of items assigned to representative $j$ by $k$-means to get supply
\EndProcedure

\State Compute market equilibrium in the representative market
\end{algorithmic}
\end{algorithm}

Algorithm~\ref{alg:repr} constructs a representative market equilibrium (RME) which we denote as $(\tilde p_{rep}, \tilde X_{rep})$. The RME is of a different dimension than the original market. The question is now how to lift the prices and allocations to the original items and buyers. Prices are simple, we can just assign to every item $j$ the price of its representative item from the RME.

Allocations are more difficult. We first consider a proportional lift. We perform the lift in two steps. First we begin with $\tilde X_{rep}$ and construct a new allocation $X'$ of dimension $\hat{n} \times m$ with $x'_i$ being the allocation of \textit{original} items to each representative buyer. Each representative buyer receives a supply weighted share of each item. Let $r(j)$ be the representative item for real item $j$. Then the amount of item $j$ that is allocated to representative buyer $i$ in this step is 
$$x'_{ij} = \dfrac{s_{j}}{\sum_{k \in r(j)} s_k} \tilde x_{i r(j)}.$$

Given $X'$ we finish the lift by now splitting items across individuals proportionally to their budget. We denote the final allocation as $\hat{X}\in \R_+^{n\times m}$. Let $r(i)$ denote the representative buyer for real buyer $i$. The allocation of item $j$ to real buyer $i$ is $$\hat{x}_{ij} = \dfrac{b_i}{\sum_{k \in r(i)} b_k} x'_{r(i)j}.$$

Because everything is allocated proportionally we have that the supply constraint binds and thus the allocation is a full allocation. The proportional lift has the advantage that it is simple to compute and that our bounds can be applied directly. The allocations from the proportional RME are equivalent to those which would result if we constructed an approximate valuation matrix $\hat{V}$ by getting vector representations for each buyer/item, getting their representative abstraction, and replacing $v_{ij}$ by $\text{vec}(r(i)) \cdot \text{vec}(r(j)).$

\subsubsection{Recursive Lift}

Proportional lifting is arguably the most direct and natural way to lift a solution. However, it makes no attempt to utilize the fact that the different buyers mapping to a single representative buyer are generally heterogeneous, even if similar.
Next we present a lifting procedure which exploits the heterogeneity within each group of buyers assigned to the same representative buyer to improve on proportional lifting.

First, let us define $X^{R}$ to be an allocation of original items to representative buyers. We define a lift $L(X^{R})$ to be an allocation to the original buyers such that they are only allocated items from their respective representative buyer. We let $L_i(X^{R})$ be the bundle of goods received by buyer $i$ in the lift. We denote by $r(\tilde i)$ the set of buyers assigned to representative buyer $\tilde i$.

\begin{defn}
  $L(X^R)$ is a \textit{local lift} if for any representative buyer $\tilde i$, we have that $\sum_{i \in r(\tilde i)} L_i(X^R) \leq x^R_{\tilde i}$.
\end{defn}

Secondly, we shall use the less formal restriction that a local lift should be computable using only $X^R$ and the set of valuations $\{v_i : i \in r(\tilde i)\}$. 

To be concrete, consider the following example:
\begin{example}
  There are three buyers and three items, all with supplies and budgets equal to one. A representative market is created with two buyers: $\{(1), (2,3)\}$ and two items: $\{(1), (2,3)\}$. So, the first buyer and item remain unabstracted, but the latter two buyers and items are merged into a single representative buyer and item. 
  Now, say that the solution to the abstraction is to split each of the two abstracted items in half. Then, the proportional lift is for buyers $2$ and $3$ to receive $\frac{1}{4}$ of each of the three original items. A local lift on the abstracted buyer is any reallocation of the items such that buyers $2$ and $3$ receive at most half of each item in aggregate.
  \label{example: local lift}
\end{example}

The concept of a local lift is natural when we want to consider using parallelization for the computation of the lift. Locality means that only information about the allocation a representative buyer received is used to compute the re-allocation to the buyers represented by the representative buyer. Thus, we can solve the main abstraction and then solve each component of the lift (defined by each representative buyer) in parallel without any message passing. The proportional lift is a local lift.

Recall that the original objective function for the EG equilibrium is the maximization of Nash social welfare. Thus, we now consider a lift which is optimal in the family of local lifts in the sense that it is the NSW-maximizing one.

The lift works as follows: For each representative buyer, we look at the allocation of original items to that representative buyer in the abstraction equilibrium, denote this by $\tilde{x}^R_{\tilde i}$. We then construct a new \emph{local} market for each representative buyer $i$. The buyers in the local market are individuals in $r(\tilde i)$, their budgets are their original budgets, and supplies are $\tilde{x}^R_{\tilde i}$. 

The lift $L(X^R)$ is the solution to 
$$\max_{X} \left\{ \sum_{i \in r(\tilde i)} \log (v_i \cdot x_i ) : \sum_{i \in r(\tilde i)}{x_i} \leq \tilde{x}^{R}_{\tilde i}\right\} .$$
This lift yields a market equilibrium in the local market. We call this approach \emph{recursive representative market equilibrium} (RRME).
From the way the lift is constructed it is obvious that a local lift maximizes Nash social welfare among all local lifts if and only if it is an RRME.

In Example~\ref{example: local lift}, the RRME is constructed by creating a new market with buyers $2$ and $3$ in it, and where the supply of each item is $1/2$. Then, we lift their allocation by finding a market equilibrium allocation of this new market. More generally we would need to solve for multiple markets, but here there is only one representative buyer that represents multiple original buyers.

We can also show that RRME is guaranteed to improve on several metrics relative to the proportional lift: Pareto gap, regret, MMS gap, and NSW by Theorem~\ref{thm:nsw}. Intuitively, this is due to the utility guarantee provided by market equilibrium, which says that each buyer does better than their budget-proportional allocation of each item.

\begin{theorem}
  Let $(p,X)$ be a solution obtained from proportional allocation from solving a representative market instance. Let $(p,X^R)$ be the solution obtained by applying RRME to $(p,X)$. The RRME solution leads to weakly lower Pareto gap, regret, and MMS gap, as well as weakly higher NSW.
  \label{thm:rrme}
\end{theorem}
\begin{proof}

First we note the following simple fact which holds for any agent $i$: the utility of $i$ under $X^R$ is weakly greater than that under $X$, i.e., $v_i \cdot x^R_i \geq v_i\cdot x_i$. This is because (a subset of) $X^R$ is a market equilibrium in the recursive market for the corresponding $\tilde i$, and an agent is guaranteed to get at least the value of the budget-proportional allocation in any market equilibrium.

 The Pareto gap is the value of a linear program that maximizes social welfare (minus current welfare) subject to the constraint that each agent is weakly better off. Since utilities are greater in $X^R$ this is a strictly more constrained problem than for $X$, and thus the value, i.e., the Pareto gap, is lower.

 Since we keep prices the same the optimal bundle $x_i^*$ for each agent remains the same for $(p,X)$ and $(p,X^R)$. Thus the only affected part of regret is the negative term, which is weakly greater under $X^R$ since utilities are weakly greater.

 That the MMS gap is smaller and NSW greater follows directly from each agent having weakly-higher utility.
\end{proof}

While the objectives in Theorem~\ref{thm:rrme} are all guaranteed to improve in the RRME over the proportional lift, the example below shows that envy may not. Intuitively, this is because we may have an agent $i$ who does not envy agent $j$ very much in the proportional lift, but in the RRME $j$ improves very much while $i$ improves very little, thus generating an increase in envy. 

\begin{example}
Consider a 5-buyer-4-item instance with valuations $v_1=v_2=[1.5,1.5,0,0], v_3=[0,0,1+\eps,1-\eps], v_4=[0,0,1-\eps,1+\eps], v_5=[1.5,1.5,1+\eps,1-\eps]$, and budgets and supplies equal to $1$. Now consider the abstraction where buyers $1,2,5$ are clustered to a representative buyer with valuation $v_{\tilde 1}=[1.5,1.5,0,0]$ and budget $3$, and buyers $3,4$ are clustered to a representative buyer with valuation $v_{\tilde 2}=[0,0,1.5,1.5]$ and budget $2$. The representative market equilibrium with equal rates is to assign representative buyer $\tilde 1$ all of items $1$ and $2$, each with price $1.5$ and assign representative buyer $\tilde 2$ all of $3$ and $4$ at price $1$ each. This assignment leads to no envy when we perform the budget-proportional allocation. However, if we apply RME to representative buyer $\tilde 2$ then buyers $3$ and $4$ get assigned all of items $3$ and $4$, respectively. Now buyer $5$ envies buyer $3$, as they could get $\eps$ more utility.
\end{example}

We view the potential increase in envy as a necessary price to pay in order to reap the benefits of locally optimizing each of the representative allocations. In order to avoid potential envy increases, we would need to take into account agents $i \notin r(\tilde i)$ when computing the lift, which would often not be feasible computationally (since the whole point of the representative abstraction was to avoid a convex program whose size is on the order of $n \times m$).


\section{Experimental Evaluation}

\subsection{First-Order Methods for Market Equilibrium}

We initially considered solving the EG convex program directly via the CVXPY package~\citep{cvxpy,cvxpy_rewriting}, which is a package for formulating convex programs in a uniform way, while utilizing underlying optimization solvers such as ECOS~\citep{domahidi2013ecos}, CVXOPT~\citep{andersen2013cvxopt}, and SCS~\citep{donoghue2016conic}. 
However, this approach turned out to have numerical problems once the number of agents and items reached around 130 each, regardless of the underlying solver.

To deal with those scalability issues, we solve the EG convex program via first-order methods. We consider the following Lagrangian relaxation:
\begin{align}
  \label{eq:cp_spp}
  &\min_{0 \leq p \leq \|B\|_1/s} \sum_{i \in [n]} \max_{0 \leq x_i \leq s} B_i \log(v_i \cdot x_i) - x_i\cdot p + s \cdot p\,.
\end{align}

This formulation allows us to apply standard algorithms for solving convex-concave saddle-point problems such as the primal-dual algorithm of \citet{chambolle2011first,chambolle2016ergodic} (henceforth referred to as PD). PD is an algorithm for solving problems of the following form (we omit some unnecessary generality):
\begin{align}
  \label{eq:pd_spp}
  \min_{x\in \cal X} \max_{p\in P}\ \mathcal L(x,p) \coloneqq x^TK p + f(x) + s\cdot p
\end{align}
where $K$ is a matrix with bounded norm $L = \|K\| = \max_{\|x\|\leq 1, \|p\|\leq 1}\ x^TKp$, and $f$ is a proper, lower semicontinuous convex function with a Lipschitz-bounded gradient, i.e., $\|\nabla f(x) - \nabla f(x')\| \leq L_f\|x - x'\|$ for all $x,x'\in \cal X$.

An iteration of PD is as follows:
\begin{align*}
  (\hat p, \hat X) =& \mathcal{PD}_{\tau,\sigma}(\bar x, \bar p, \tilde x, \tilde p) \text{ where } \\
  &\begin{cases}
    \hat x = \arg \min_{x\in \cal X} f(\bar x) + \langle \nabla f(\bar x), x-\bar x \rangle + x^T K\tilde p + \frac{1}{\tau}\|x-\bar x\|\,, \\
    \hat p = \arg \min_{p\in P} \frac{1}{\sigma}\|x-\bar x\| -  \tilde x^T Kp\,.
  \end{cases}
\end{align*}

The algorithm repeatedly calls $\mathcal{PD}$ to generate a sequence of iterates as follows (note that the function invocation depends on the computed value, this is implementable because $x^{t+1}$ is computed before $p^{t+1}$):
$$
(x^{t+1}, p^{t+1}) = \mathcal{PD}_{\tau, \sigma} (x^t, p^t, 2x^{t+1} - x^t, p^t).
$$

The average iterates $\bar x=\sum_{t=1}^Tx^t$ and $\bar p=\sum_{t=1}^Tp^t$ converge to a saddle-point solution. The performance of the algorithm is usually measured in terms of reducing the \emph{saddle-point residual}: $\max_{p} \mathcal L(\bar x, p) - \min_x \mathcal L(x, \bar p)$.
A natural question is how quickly this error measure is driven to zero. 

 \citet{nesterov2018computation} study a formulation similar to \eqref{eq:cp_spp}, but where prices are unbounded, and they arrive at an algorithm where the saddle-point residual goes to zero at a rate of $O(1/\sqrt{T})$, but achieve additional auction-like properties in terms of the resulting dynamics. Leveraging the structure of our problem gives us even better worst-case bounds on convergence. Because in practice such worst case bounds are rarely reached and because these theoretical results are not the primary focus of our paper we discuss them informally here for completeness and relegate the formal statements to the Appendix:

\begin{answer}
Under mild assumptions the PD algorithm reduces the saddle-point residual of the EG problem at a rate of $O\left(\frac{n^{7/2}m^2}{T}\right)$.
\end{answer}

The expression above makes some simplifying assumptions for readability. We include the full treatment in the Appendix. From the perspective of abstraction, the most important thing is that we have a $O(n^{7/2}m^2)$ dependence on problem size. For this simplified rate we can give a nice expression for the savings in worst-case convergence rate due to our representative buyer/item abstraction:

\begin{answer}
Let $n$, $m$ be the number of buyers/items in the original market and $\hat{n}$, $\hat{m}$ be the number in the abstracted market. Under mild assumptions the representative buyer/item abstraction with proportional lift has a worst case convergence rate that is $O(\frac{\hat{n}^{7/2} \hat{m}^2}{n^{7/2} m^2})$ of the convergence rate of the full problem. If we use the recursive lift instead the relevant speedup is $O(\frac{\hat{n}^{5/2} \hat{m}^2}{n^{5/2} m^2}).$
\end{answer}

While asymptotic analysis does not tell the whole story, the results do suggest that we should expect larger savings in computational complexity for `long and thin' markets (those with many buyers and relatively few items) than those with many items and few buyers, which is precisely what we see in our experiments. 

In practice, the primary concern is the memory usage of PD, as well as the cost per iteration. For the optimization formulation that uses PD, a variable is needed for every buyer and item pair, and thus memory usage is linear in the market size. Similarly, each step of PD can be computed in linear time for our setting.
In practice, we thus get an asymptotic linear improvement to both memory usage and cost per iteration. 
In theory, we get much more than a linear improvement in total running time, based on the above estimates of the number of iterations needed in the worst case. However, in practice the number of iterations needed does not seem to grow much with the size of the market, and so the primary practical savings are those due to the memory and cost-per-iteration improvements.

In extremely large cases (e.g., online ad systems with millions of buyers and billions of impressions) it would not even be feasible to run PD on the original problem instance (typically due to memory usage), and so abstracting down to a manageable size becomes compulsory.


\subsection{General Analysis Plan}
We now evaluate the abstraction approaches above on several real datasets. We first discuss the analyses we perform on all datasets then we discuss each dataset in detail and give results.

For each dataset we compute the EG equilibrium in the full market and compare to the prices/allocations we generate using abstractions. We set budgets to $1$ and supplies to be such that there is one item per person (note that since we are in the divisible case, this only affects the prices/allocations up to a scaling factor).

We vary three properties of abstractions jointly. First, we replace the valuation matrix $V$ with a rank $k$ approximation for various $k$. To compute these low-rank representations we use the singular value decomposition (SVD). Second, we consider compressing the large number of buyers into a smaller set of representative buyers. We refer to this as abstraction coarseness and measure it as a percentage of the original market size. Thus, a $40\%$ abstraction is one which replaces 7200 original buyers with 2880 representative buyers via the $k$-means procedure above. Third, we compare the use of a proportional split and the recursive splitting. The goods were not abstracted in the first two datasets (but see Section~\ref{sec:movielens} where we do abstract both sides).

In the results reported here we replace abstracted valuations (e.g., low-rank estimated ones) that are below $0$ with $0.01.$ This means that we do not allow our abstractions to turn any player into a `dummy player' that receives no items.

We measure all of our theoretical quantities of interest: regret of each individual given each allocation (we normalize this by the maximum utility of the allocation), an individual's envy (again, normalized by the utility that an individual would receive from the envied bundle), Nash social welfare (normalized by the Nash social welfare of the unabstracted market), Pareto optimality (normalized by the utility in the Pareto-optimal allocation), maximin share (normalized by utility being achieved).

We also look at one quantity for which we do not have bounds: total welfare/efficiency of the allocation (normalized by the total welfare of the unabstracted market). Note that market equilibrium makes no pretense of maximizing efficiency (indeed it only guarantees Pareto optimality and, in the case of EG equilibrium, Nash social welfare). 

\subsection{Dataset: Jester1}
\begin{figure}[t]
\begin{center}
\includegraphics[width=1.50in]{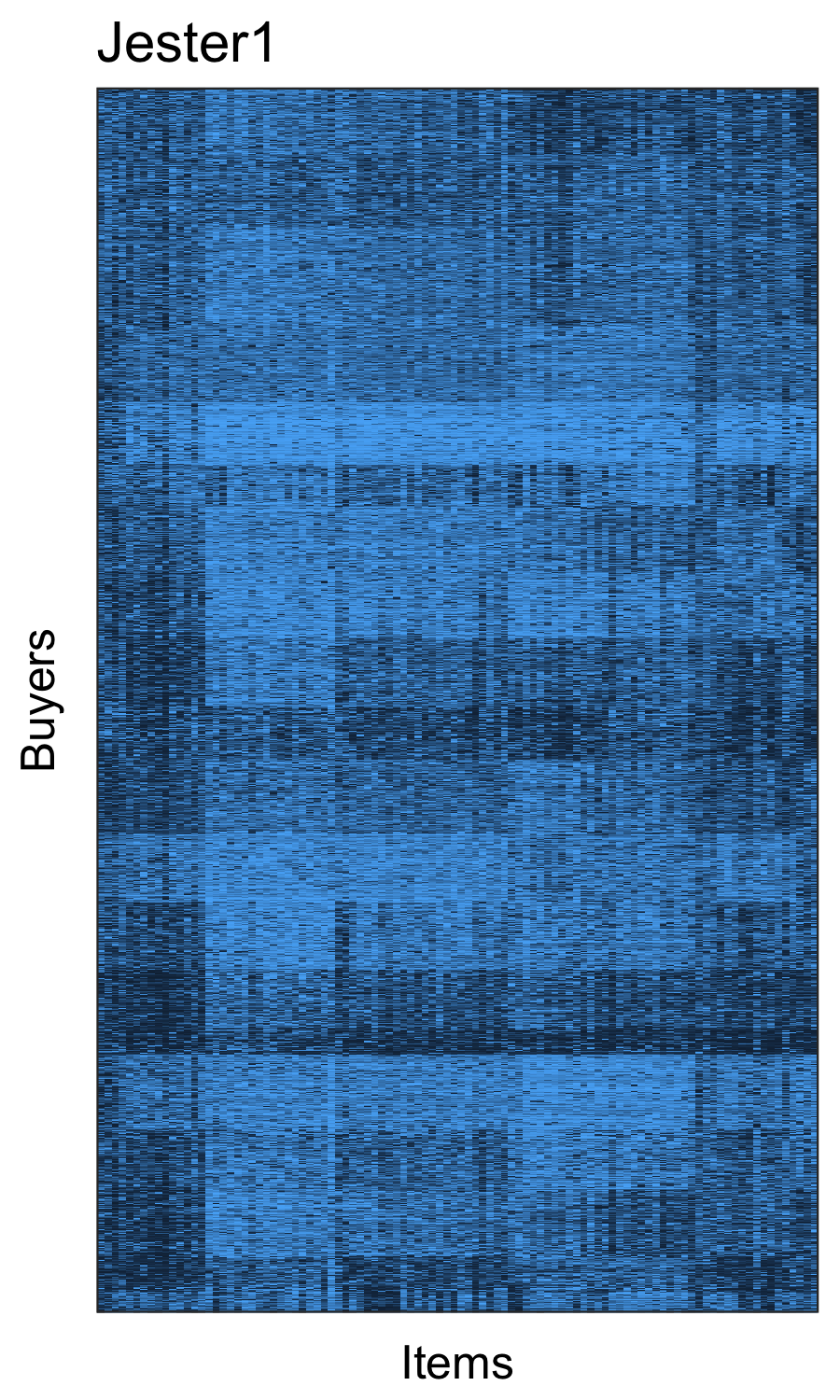}
\includegraphics[scale=.50]{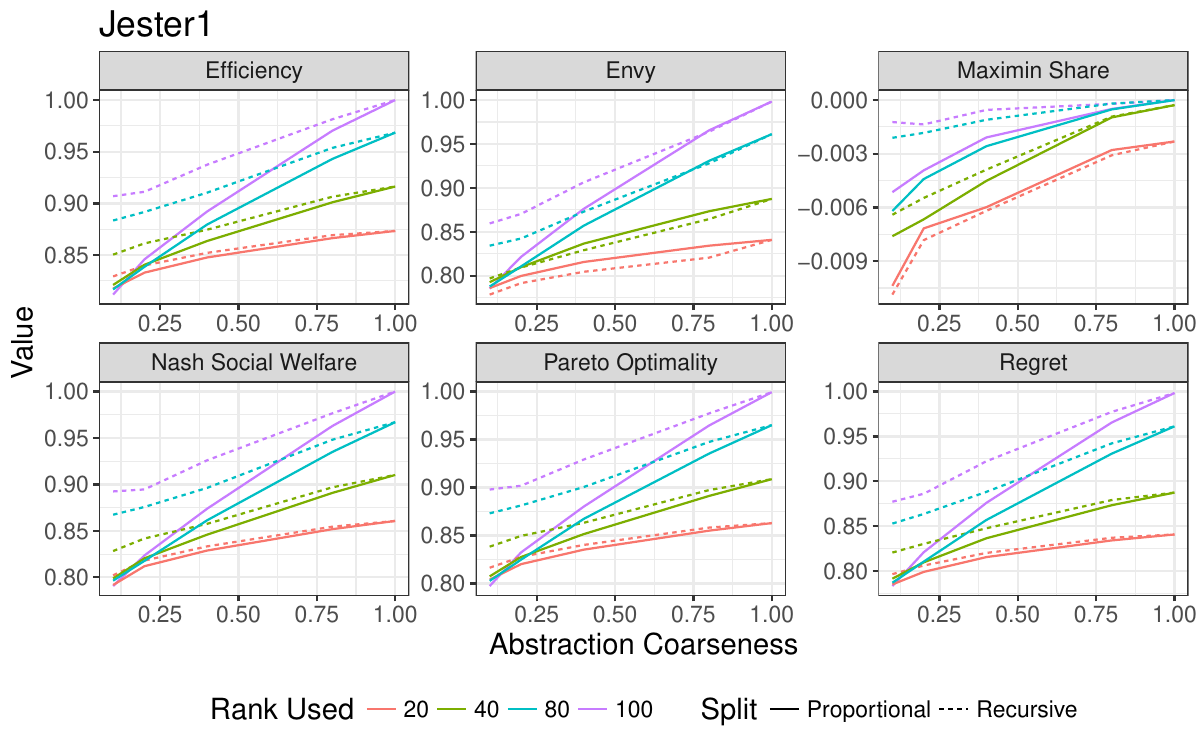}
\end{center}
\caption{Market created by using valuation matrix from the Jester1 dataset with 7200 buyers and 100 items (left panel). There is clear structure suggesting compressibility. In this experiment we apply representative buyer abstraction only. Equilibria computed in quite coarse abstractions maintain good properties (right panel).}
\label{jokes_results}
\end{figure}

We begin by considering an existing dataset used for the evaluation of recommender systems. In typical recommender system datasets individuals give a rating (rather than a monetary evaluation) for each object. Nevertheless, standard recommender system datasets are still useful examples of valuation matrices for two reasons: first, when equilibrium computation is used for market design/fair division (e.g., CEEI and related algorithms at Wharton or Spliddit) the budget given to each individual is only virtual currency and, second, allocations in EG equilibrium are not affected by the scaling of an individual's utilities so they are relatively robust to different ways individuals may interpret ratings.

The first dataset we consider, Jester 1, contains the evaluation of 100 jokes by over 79,000 individuals~\citep{goldberg2001eigentaste}. We extract a submatrix of 7200 individuals that have rated all of the jokes giving us a complete market. Ratings in Jester are on a continuous scale between $-10$ and $10$ and our theory requires positive valuations so we shift the valuations to be (weakly) positive by shifting the whole matrix by $+10$.

Figure~\ref{jokes_results} left panel shows a representation of the Jester1 valuation matrix with lighter colors representing higher valuation. To show the structure more clearly we normalize each individual's valuations to lie in $[0,1]$ for the figure\footnote{Recall that since EG equilibrium with budgets is equivalent to equal utility rates per item being received multiplying an individual's utility by a constant does not affect the equilibrium allocation.} (not the experiments) and perform clustering on the rows and columns to set the order they will be displayed. There is clear block structure suggesting that we can abstract the market using the representative buyer method effectively. 

The right panel shows our main results. Even a coarse abstraction with recursion (720 representative buyers for 7200 original buyers and 20\% rank compression) can yield an allocation that achieves almost $90\%$ of the Nash social welfare and efficiency of the unabstracted allocation and is almost Pareto optimal (there exists an allocation that improves total utility by 10\% without leaving anyone worse off). Individuals display some regret (the abstracted allocation achieves $85\%$ of the utility they could achieve if buyers optimized given the abstraction prices) and some envy. Finally, we see that a weak notion of fair division, maximin share, is almost completely guaranteed under even the coarsest abstraction.   

\subsection{Dataset: Household Items}
\begin{figure}[h!]
\begin{center}
\includegraphics[width=1.5in]{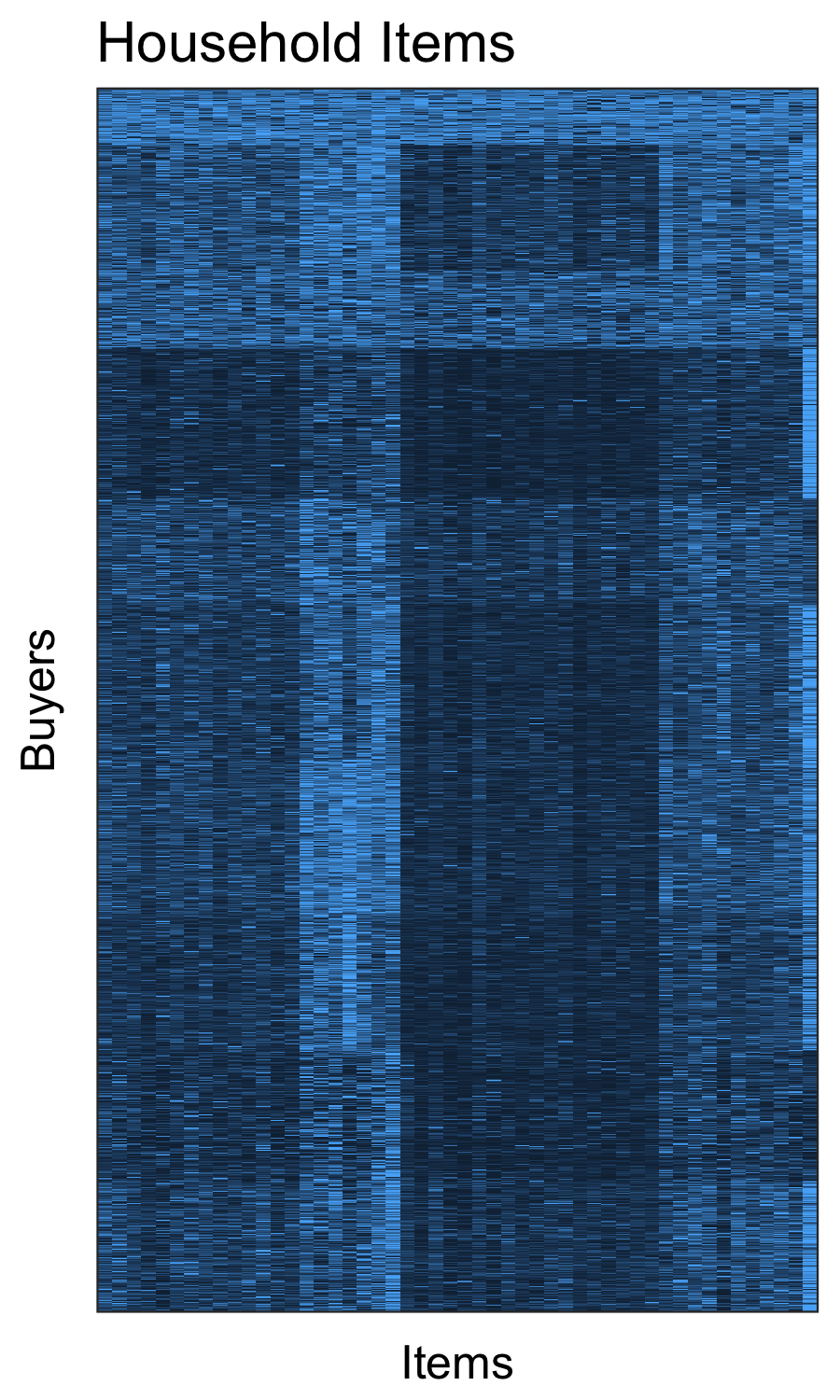}
\includegraphics[scale=.5]{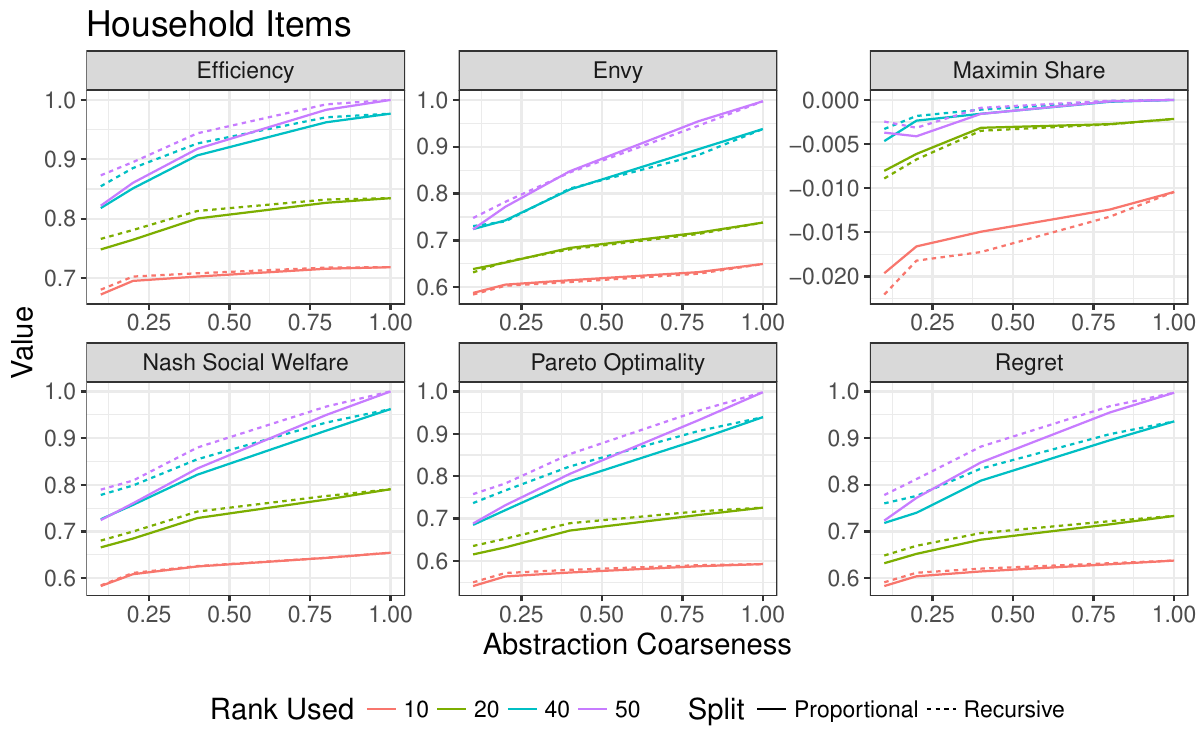}
\end{center}
\caption{Market created by using a valuation matrix from the Household Item dataset with 2876 buyers and 50 items (left panel). There is clear structure suggesting compressibility. In this experiment we apply representative buyer abstraction only. Equilibria computed in quite coarse abstractions maintain good properties (right panel).}
\label{hh_results}
\end{figure}

We also construct a new dataset of individuals who estimated their willingness-to-pay for 50 household items selected from a well-regarded online review site. Unlike in typical recommender datasets where items can often be naturally clustered into `types' (e.g., romance movies, horror movies) we specifically chose items to span a broad range of product categories. (Examples: rain jacket, tool box, toaster, shovel, bluetooth headphones, thermos, blackout shade, bike pump, etc; see Figure~\ref{fig:item_values} for full list of items and average valuations for each one.) Items were chosen to be representative of a large number of categories to make our compression problem more difficult.

In a survey conducted on an online labor market, individuals from the US were presented with a photo and brief description of each item. To deal with quality issues, specific brands and models for each item were selected from an online review site to be the `best in their class' and participants were informed of this. Items were presented in a random order and participants entered their personal US dollar valuation for each item. At the end of the survey, participants were asked how well they felt they understood the questions/task. We use data from the 2876 individuals (out of 3300) that said they felt the task was natural and they could give a good personal valuation for all of the items.
Here we list the items we used as well as their average valuations by individuals. 

\begin{figure}[h!]
\begin{center}
\includegraphics[scale=.5]{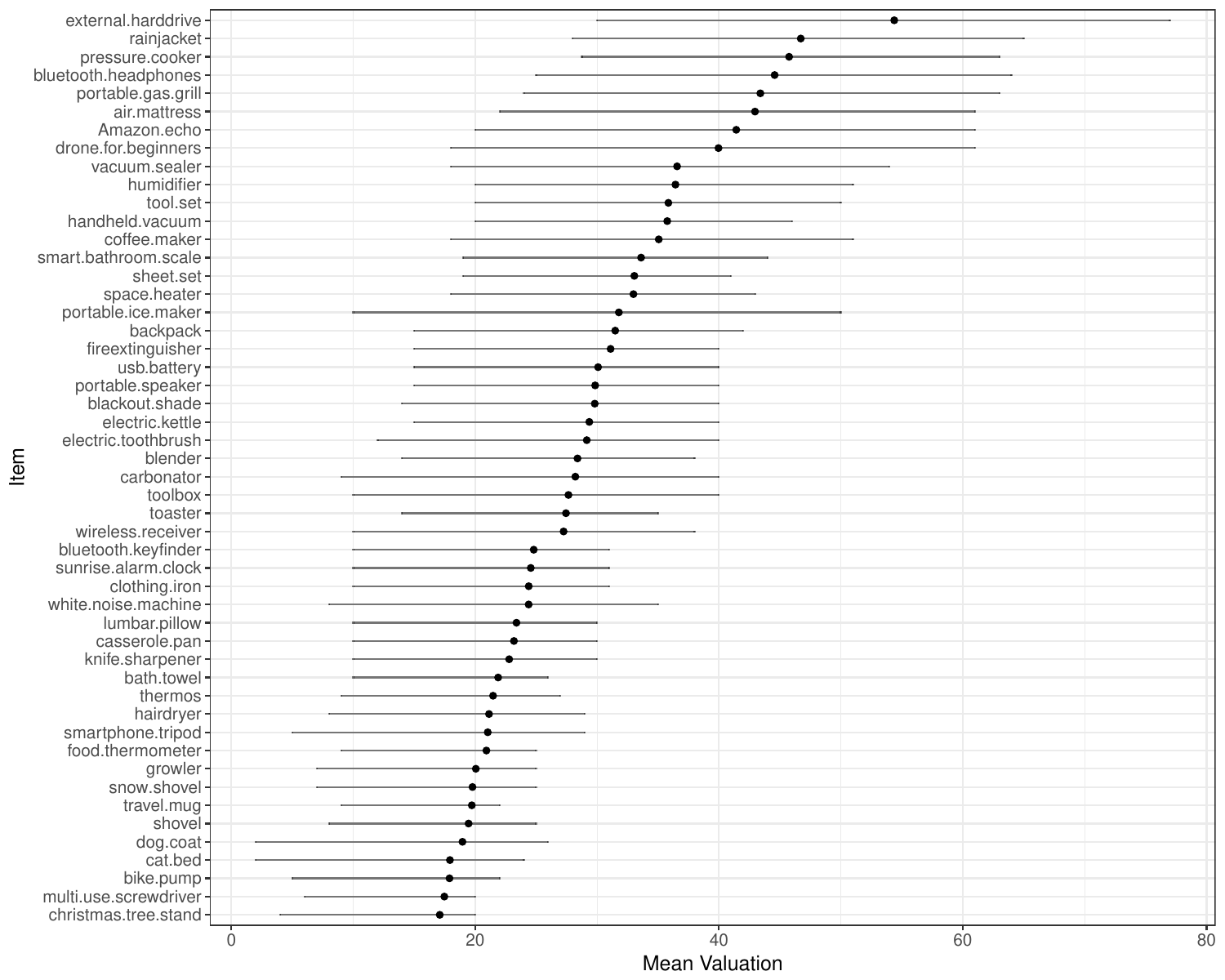}
\end{center}
\caption{Items used in household item survey as well as their average valuations and first and third quantiles.}
\label{fig:item_values}
\end{figure}

For this computational experiment, we apply matrix approximation, representative buyer/item modeling, and recursive reallocation.

Figure~\ref{hh_results} shows the market, represented as in the Jester1 experiment above (left panel) as well as our main results (right panel). We see that even coarse abstractions can yield good properties and that recursive allocation meaningfully improves Pareto optimality and Nash social welfare of the approximations. There is a sharper tradeoff between compression and abstraction performance than in Jester. Recall though that this is because our candidate items were specifically chosen to be of quite different types and so display lower inter/intra item correlations in valuation than typical recommender datasets of, e.g., jokes or movies.

\subsection{Dataset: MovieLens}
\label{sec:movielens}
Both datasets above includes many more buyers than items. We would like to consider a dataset where both sides of the market are large. However, asking individuals to evaluate many hundreds of items would be expensive and likely yield low quality data. Instead, we construct such a market from MovieLens 1M~\citep{harper2016movielens}, a standard dataset for the evaluation of matrix completion algorithms. This dataset contains 6040 individuals with their ratings for a selection of 3952 total movies. In total, the dataset includes 1 million ratings. We treat each 1-5 rating as a valuation.

MovieLens 1M is very sparse (only a small percentage of possible user/movie rating pairs are observed) so we cannot use the data directly to construct a market as we could in the other datasets. 

Instead, we will construct a semi-synthetic dataset by using the data to fit a low-rank approximation to the true rating matrix and doing our analysis on this model. We consider the submatrix which consists of the 1500 movies with the most observed ratings (avg.\ number of ratings: 579) and the top 1500 users who have rated the most movies (avg.\ number of ratings: 425). We use this submatrix as our valuation matrix for the market (Figure~\ref{ml_results} left panel shows the market).

To generate the complete submatrix for the MovieLens 1M market we take the observed ratings denoted as $\mathcal{O}$ with generic element $o_{ij}$. We use PyTorch \citep{paszke2017automatic} and minimize the loss function 
$$\sum_{o_{ij} \in \mathcal{O}} (o_{ij} - \text{vec}(user_i) \cdot \text{vec}(movie_j) + \text{bias}(user_i) + \text{bias}(movie_j))^2$$ 
over $d$ dimensional vectors for each user and movie and $1$ dimensional biases for each user and movie. We randomly split the data into an $80\%$ training set and a $20\%$ validation set and use the validation set to choose $d$ from the set $\lbrace 20, 30, 50, 70, 100 \rbrace$ and the weight decay parameter from the set $\lbrace 10^{-5}, 10^{-4}, 10^{-3}, 10^{-2}, 10^{-1} \rbrace.$  We choose the best performing hyper-parameters from this group via the validation set (d=$20$, weight decay=$10^{-5}$). 

We do not do any extensive tuning nor consider `proper' test set accuracy as we are simply using this model to construct a simple semi-synthetic dataset for our experiment rather than trying to achieve state of the art prediction results.

Because the completion method already uses a low-rank approximation and because we cannot get the ground truth market equilibrium we only test the representative buyer abstraction here. Unlike in the datasets above, we construct both representative buyers and representative items. The right panel of Figure~\ref{ml_results} shows that in this case even an extremely coarse abstraction (150 representative buyers, 150 representative items instead of the original 1500) can achieve outcomes that are quite close those of the full equilibrium. 

\begin{figure}[h!]
\begin{center}
\includegraphics[width=1.5in]{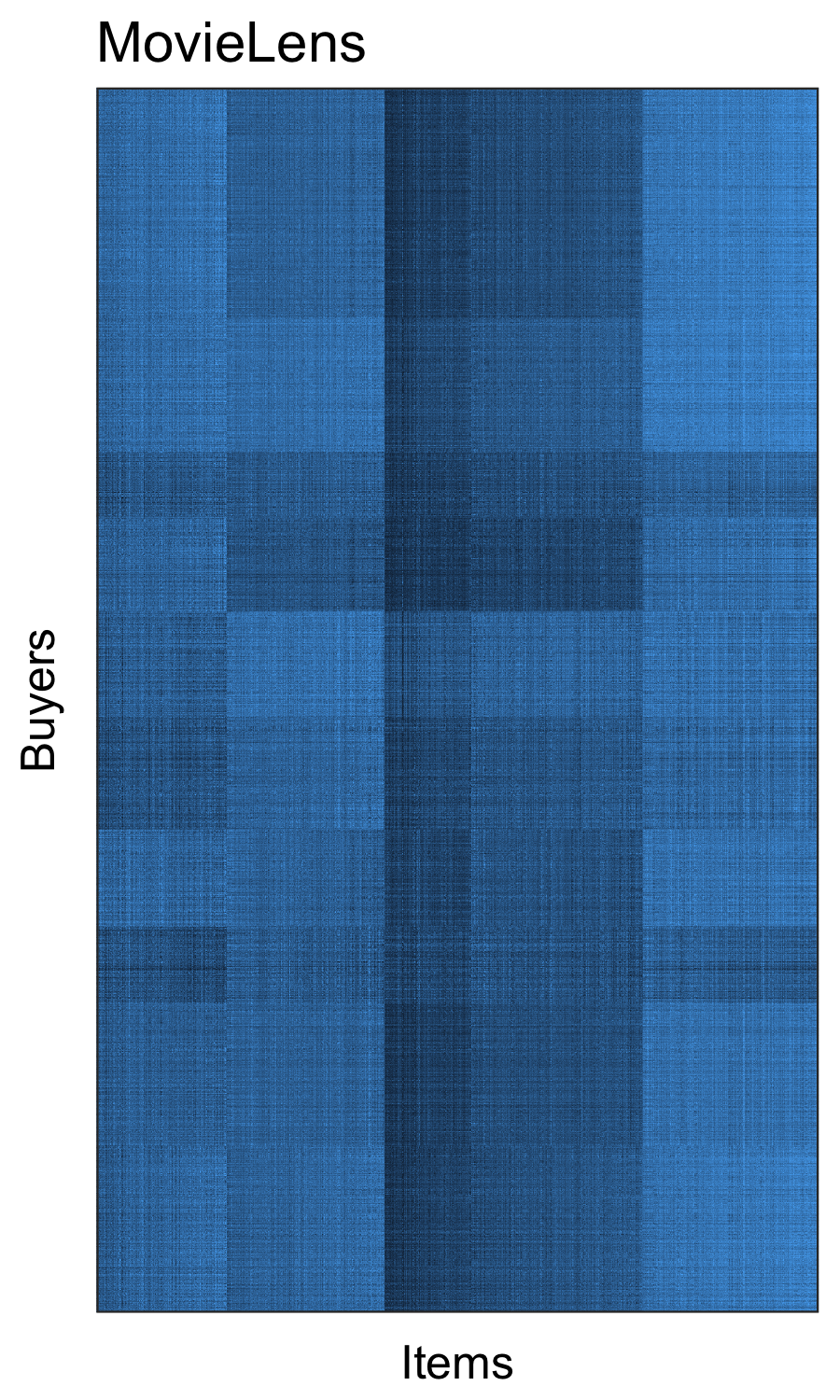}
\includegraphics[scale=.5]{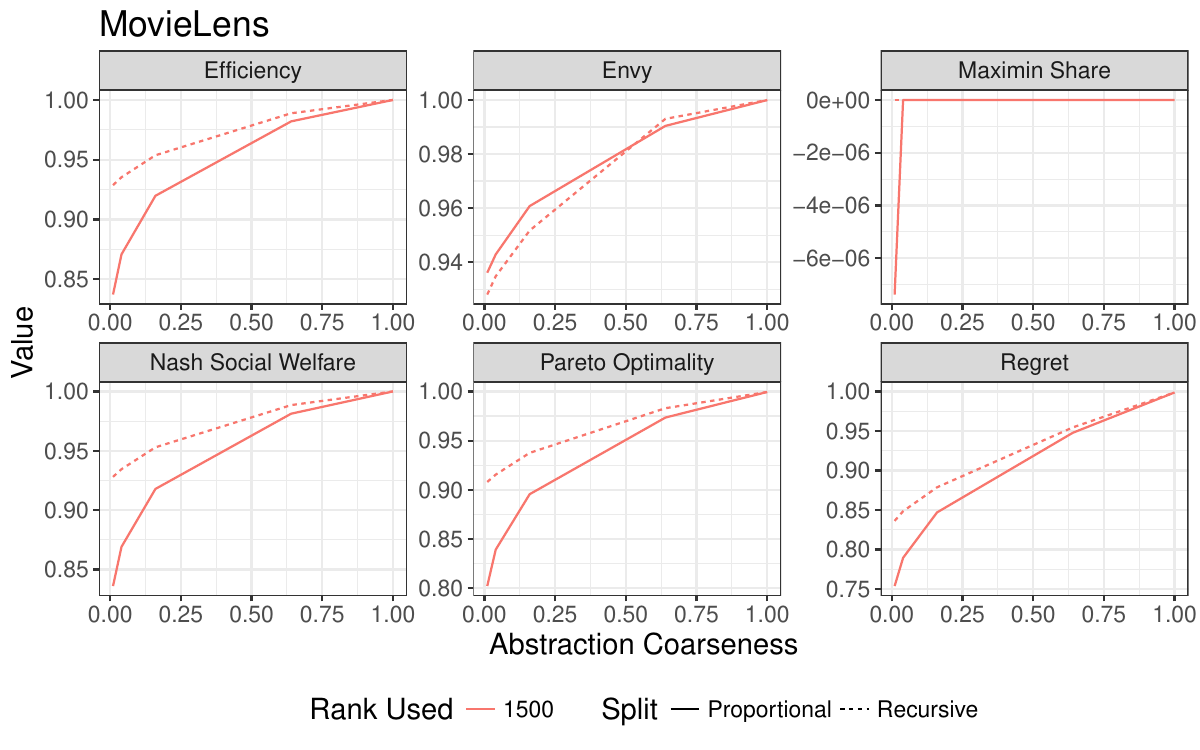}
\end{center}
\caption{Market created by using valuation matrix from the MovieLens dataset with 1500 buyers and 1500 items (left panel). There is clear structure suggesting compressibility. In this experiment we apply both representative buyer and representative item abstraction. Equilibria computed in quite coarse abstractions maintain good properties (right panel).}
\label{ml_results}
\end{figure}

\subsection{A Large Market}

In each of the experiments above, we used markets where we could compute the equilibrium for the full instance. We now consider an experiment for a market that is too large for us to compute the equilibrium for the full instance. In this case we cannot compute metrics like the NSW loss or the Pareto gap but we can still compute individual-level metrics like envy and regret. 

As a large dataset we consider the larger \emph{MovieLens 10 million} dataset. Here $69,897$ users rate $8228$ movies, for a total of $\sim 10$ million ratings. We construct the valuation vectors as in the smaller MovieLens markets above except using a $100$ dimensional embedding.

This problem is too large to solve the full instance on our commodity hardware. Instead, we solve a $4,000 \times 4000$ abstraction---this is a more than $97\%$ reduction in size from the full market. We compute the allocations using both the proportional and recursive lifts. 

Figure~\ref{big_ml_results} shows the results of sampling $1,000$ buyers and computing their envy and regret associated with the lifted allocations and prices. We attain average regret near $16\%$ for the proportional lift and $14\%$ for the recursive lift, and correspondingly small average envy values. The worst values of regret and envy are not much higher.

\begin{figure}[h!]
\begin{center}
\includegraphics[scale=.5]{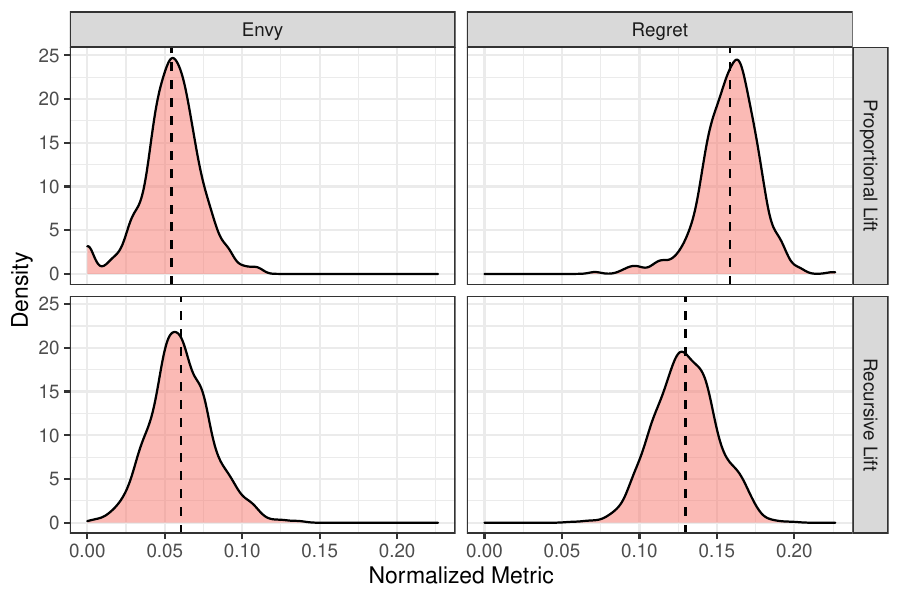}
\end{center}
\caption{In the large MovieLens market, we cannot solve for the full equilibrium, however, we can compute individual levels of regret and envy for some sampled buyers. We see that a $97\%$ reduction in problem size is associated with a relatively small regret and envy in both the proportional and recursive lifts.}
\label{big_ml_results}
\end{figure}

\subsection{Quasi-Linear Markets}

So far we have focused on the case where budgets are useless outside the market. This is the setting of interest for market designers that are using equilibrium as a tool for fair division of items (e.g., CEEI) and also marketing managers in ads marketplaces. However, our approximation results are also applicable to quasi-linear Fisher markets---those where leftover budget has use outside the market. In our household items dataset (unlike in Jester and MovieLens) individual valuations are given in dollars. Although quasi-linear valuations are not the main focus of this work, we present a preliminary experiment to investigate whether abstractions are useful in a quasi-linear market. 

We use the same household items setup as above but now we consider quasi-linear utility and give each individual a budget of $\$ 10$. The quasi-linear case is slightly different from the standard one in several other ways. First, we know that there exists an equilibrium that can be found via a convex program but that this program no longer maximizes Nash social welfare. Second, doing a recursive lift generates multiple prices per item (different prices within each cluster) and is thus not interpretable anymore. Thus, we do not look at either of these in this experiment. 

In addition, because now money is `real' rather than existing solely for the sake of the mechanism, now we also consider an additional metric of interest: how well abstraction prices match up with true equilibrium prices. To measure this we consider the normalized price accuracy given by 
$$1 - \dfrac{\sum_{j} (\hat{p}_j - p_j)^2}{\sum_j p_j^2}.$$ 
A value of $1$ means prices are perfectly predicted. It is known that prices are continuous in the $V$ matrix, so for a given desired level of price accuracy there exists a sufficiently small $\Delta V$ that attains that level. However, like efficiency or weak Pareto optimality, prices are not an LQFS and thus we have no formal guarantees about their accuracy when $\Delta V$ is not small.

Figure~\ref{hh_ql_results} show our main results. As with the standard Fisher case, we see that even coarse abstractions lead to fairly good results.

\begin{figure}[h!]
\begin{center}
\includegraphics[scale=.5]{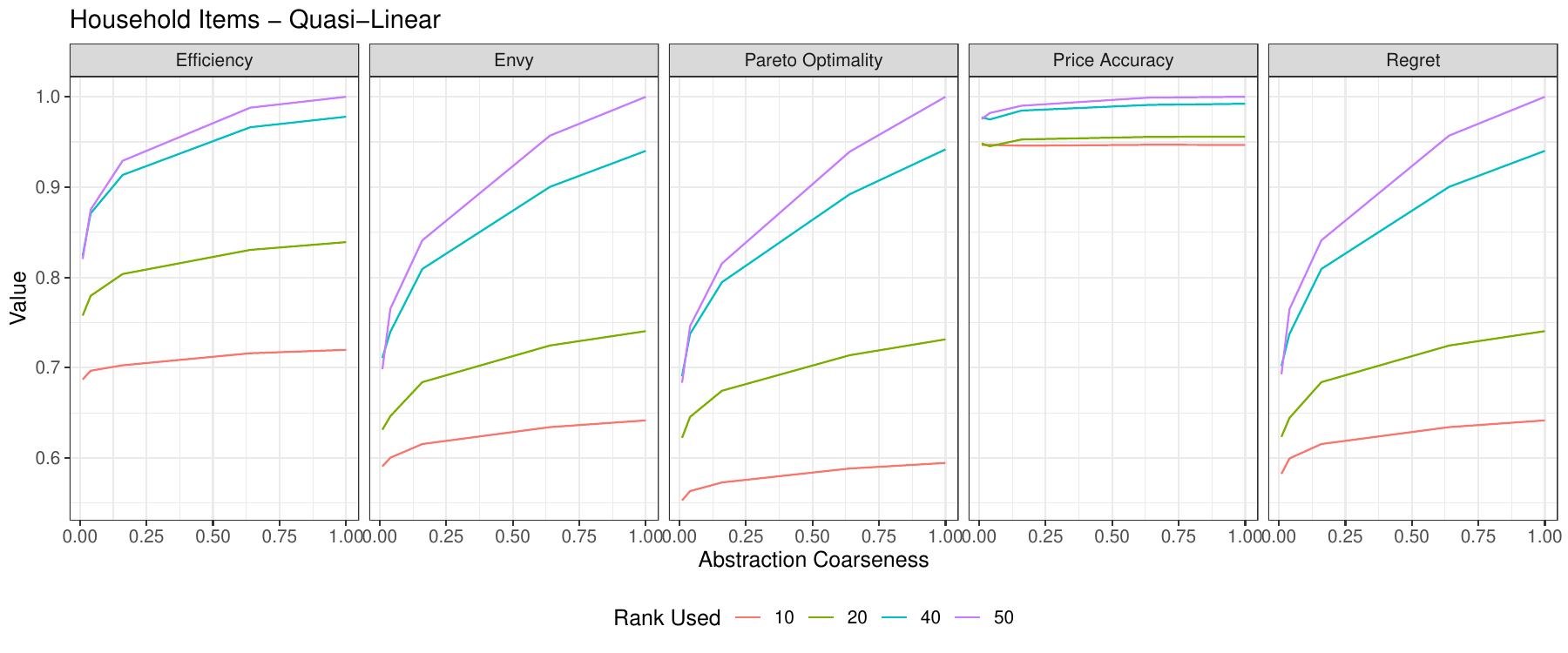}
\end{center}
\caption{Quasi-linear Equilibria computed in quite coarse abstractions maintain good properties.}
\label{hh_ql_results}
\end{figure}


\section{Conclusion and Future Directions}

Computing market equilibria is a difficult problem. We have shown that the method of abstraction---solving a coarser problem and lifting the solution---can be used to reduce the information requirements and computation requirements of equilibrium computation. In addition, we have introduced a new dataset that we hope others can use for work on fair division.

There are many future directions to expand this research. In this work we looked at Fisher markets which assume additive valuations. This rules out situations where goods can be complements or substitutes. In the case of such preferences the information required to compute a market equilibrium becomes extremely large as we need to know the valuations of individuals for every possible combination of goods \citep{porter2003combinatorial}. However, recent work has begun to explore representing preferences with complements and substitutes in low-rank vector format \citep{ruiz2017shopper,peysakhovich2017learning}. An interesting future direction is combining such techniques with abstraction methods. Generalizing beyond linear valuations could lead to greater scalability for problems such as public decision making, which have recently been related to market equilibria~\citep{garg2018markets}.
Another interesting direction would be to apply abstraction techniques to more general market equilibrium problems under utilitarian welfare maximization.

A problem related to our setting is the \emph{cake cutting} problem, where the goal is to allocate (WLOG) the interval $[0,1]$ to agents with heterogeneous preferences over subsets of the interval~\citep{edmonds2011cake,chen2013truth,balkanski2014simultaneous,aziz2016discrete}. This setting is significantly more complicated than linear Fisher markets.
It would be interesting to investigate how our abstraction approach can be generalized to this harder setting, although the relationship to matrix decomposition becomes less pronounced.
Some variants of cake cutting can be related directly to Fisher markets. \citet{aziz2014cake} show that cake cutting with piecewise-constant utilities is equivalent to linear Fisher market equilibrium, and \citet{gao2020infinite} show that general piecewise-linear utilities in cake cutting can be handled via Fisher markets with a continuum of items. The results of \citet{aziz2014cake} imply that our results extend directly to piecewise-constant-utility cake cutting, An interesting research direction is to what extent our results can be used to extended to other cake-cutting settings, for example via the relationship established by \citet{gao2020infinite}.

We considered the use of market equilibria as allocation mechanisms (as in, e.g., the literature on fair division). However, another important use of computation of market equilibria is counterfactual estimation as in structural models in economics \citep{berry1995automobile,chawla2017mechanism}. For example, an online marketplace may want to know how prices (and thus revenues) would change if certain market conditions (e.g., supplies, budgets) were to change. Using the method of abstracting large markets to answer such questions is also an important future direction.

We looked at two methods that could be jointly used to create abstractions: low-rank approximation and representative buyer/item modeling. The representative market abstraction speeds up computation but we did not use low-rank structure for anything but filling in missing data. An interesting algorithmic question is whether the low-rank structure can be leveraged to speed up the gradient calculation steps of our first-order methods, for example, by employing recent techniques for fast nearest neighbor search \citep{JDH17}.

Our work fits into the nascent but growing literature on combining techniques from machine learning/AI with classical results from game theory to solve market design \citep{feng2018deep,golowich2018deep}, game abstraction \citep{moravvcik2017deepstack,brown2018deep}, and agent design \citep{lerer2017maintaining,lerer2018learning} problems that cannot be easily solved in closed form. In this work we leveraged standard linear abstractions (low-rank approximation, k-means clustering). A question for future work is whether more complex, non-linear methods, can be used to construct even better abstractions.


\section{Author Biographies}
\paragraph{Christian Kroer}
Christian Kroer is an Assistant Professor of Industrial Engineering and Operations Research at Columbia University, as well as a member of the Data Science Institute at Columbia. His research interests are at the intersection of operations research, economics, and computation, with a focus on how optimization and AI methods enable large-scale economic solution concepts. He obtained his Ph.D. in computer science from Carnegie Mellon University, and spent a year as a postdoc with the Economics and Computation team at Facebook Research. He was a winner of the 2016-2018 Facebook Fellowship in economics and computation, and a runner-up in the 2017 INFORMS Computing Society Student Paper Competition.

\paragraph{Alexander Peysakhovich}
Alexander Peysakhovich is a senior research scientist at Facebook Artificial Intelligence research. His work includes both basic and applied research at the intersection of behavioral science, economics, game theory, and artificial intelligence. He holds a PhD in Behavioral Economics from Harvard."

\paragraph{Eric Sodomka}
Eric Sodomka is a Research Scientist on the Core Data Science team at Facebook, as well as an advisory board member of Data Science Nigeria. His research interests lie at the intersection of economics and computer science. In particular, he is interested in designing markets that facilitate collaboration across academic disciplines, that connect theoreticians and practitioners, and that improve opportunities for underrepresented groups. He currently resides in Edinburgh, Scotland. Prior to that, he worked in Facebook's Menlo Park office, where he managed the Algorithmic Game Theory and Market Design Research group. He received his Ph.D. in Computer Science from Brown University.

\paragraph{Nicolas Stier-Moses}
Nicolas Stier-Moses is a Director at Facebook Core Data Science. His work leverages innovative research to drive impact to the products, infrastructure and processes at Facebook, the company. The group draws inspiration from a rich and diverse set of disciplines including Operations, Economics, Mechanism Design, Algorithms, Statistics, Machine Learning, Experimentation, and Computational Social Science. Between 2014 and 2017, he supported the Economics, Algorithms and Optimization team, which is one of the areas of focus of Core Data Science. Prior to joining Facebook, Nicolas was an Associate Professor at the Decision, Risk and Operations Division of Columbia Business School and at the Business School of Universidad Torcuato Di Tella. He received a Ph.D. degree from the Operations Research Center at the Massachusetts Institute of Technology.

\bibliographystyle{informs2014}
\bibliography{refs}

\begin{thebibliography}{82}
\providecommand{\natexlab}[1]{#1}
\providecommand{\url}[1]{\texttt{#1}}
\providecommand{\urlprefix}{URL }

\bibitem[{Akshay~Agrawal \protect\BIBand{} Boyd(2018)}]{cvxpy_rewriting}
Akshay~Agrawal SD Robin~Verschueren, Boyd S (2018) A rewriting system for
  convex optimization problems. \emph{Journal of Control and Decision}
  5(1):42--60.

\bibitem[{Andersen et~al.(2013)Andersen, Dahl, \protect\BIBand{}
  Vandenberghe}]{andersen2013cvxopt}
Andersen M, Dahl J, Vandenberghe L (2013) Cvxopt: A {Python} package for convex
  optimization. \url{http://abel.ee.ucla.edu/cvxopt} . Last accessed April
  2021.

\bibitem[{Azevedo \protect\BIBand{} Budish(2018)}]{azevedo2018strategy}
Azevedo EM, Budish E (2018) Strategy-proofness in the large. \emph{The Review
  of Economic Studies} 86(1):81--116.

\bibitem[{Aziz \protect\BIBand{} Mackenzie(2016)}]{aziz2016discrete}
Aziz H, Mackenzie S (2016) A discrete and bounded envy-free cake cutting
  protocol for any number of agents. \emph{Foundations of Computer Science
  (FOCS), 2016 IEEE 57th Annual Symposium on}, 416--427 (IEEE).

\bibitem[{Aziz \protect\BIBand{} Ye(2014)}]{aziz2014cake}
Aziz H, Ye C (2014) Cake cutting algorithms for piecewise constant and
  piecewise uniform valuations. \emph{International Conference on Web and
  Internet Economics}, 1--14 (Springer).

\bibitem[{Baldwin \protect\BIBand{} Klemperer(2019)}]{baldwin2019understanding}
Baldwin E, Klemperer P (2019) Understanding preferences: ``demand types'', and
  the existence of equilibrium with indivisibilities. \emph{Econometrica}
  87(3):867--932.

\bibitem[{Balkanski et~al.(2014)Balkanski, Kurokawa, Br\^{a}nzei,
  \protect\BIBand{} Procaccia}]{balkanski2014simultaneous}
Balkanski E, Kurokawa D, Br\^{a}nzei S, Procaccia AD (2014) Simultaneous cake
  cutting. \emph{Proceedings of the Twenty-Eighth AAAI Conference on Artificial
  Intelligence}, 566–572, AAAI'14 (AAAI Press).

\bibitem[{Bauckhage(2015)}]{bauckhage2015k}
Bauckhage C (2015) K-means clustering is matrix factorization. \emph{arXiv
  preprint arXiv:1512.07548} .

\bibitem[{Berry et~al.(1995)Berry, Levinsohn, \protect\BIBand{}
  Pakes}]{berry1995automobile}
Berry S, Levinsohn J, Pakes A (1995) Automobile prices in market equilibrium.
  \emph{Econometrica: Journal of the Econometric Society} 841--890.

\bibitem[{Bhojanapalli et~al.(2016)Bhojanapalli, Neyshabur, \protect\BIBand{}
  Srebro}]{bhojanapalli2016global}
Bhojanapalli S, Neyshabur B, Srebro N (2016) Global optimality of local search
  for low rank matrix recovery. \emph{Advances in Neural Information Processing
  Systems}, 3873--3881.

\bibitem[{Bichler et~al.(2020)Bichler, Fichtl, \protect\BIBand{}
  Schwarz}]{bichler2020walrasian}
Bichler M, Fichtl M, Schwarz G (2020) {Walrasian} equilibria from an
  optimization perspective: A guide to the literature. \emph{Naval Research
  Logistics (NRL)} .

\bibitem[{Bichler \protect\BIBand{} Waldherr(2019)}]{bichler2019computing}
Bichler M, Waldherr S (2019) Computing core-stable outcomes in combinatorial
  exchanges with financially constrained bidders. \emph{Proceedings of the 2019
  ACM Conference on Economics and Computation}, 747--747.

\bibitem[{Birnbaum et~al.(2011)Birnbaum, Devanur, \protect\BIBand{}
  Xiao}]{birnbaum2011distributed}
Birnbaum B, Devanur NR, Xiao L (2011) Distributed algorithms via gradient
  descent for {Fisher} markets. \emph{Proceedings of the 12th ACM conference on
  Electronic commerce}, 127--136 (ACM).

\bibitem[{Borgs et~al.(2007)Borgs, Chayes, Immorlica, Jain, Etesami,
  \protect\BIBand{} Mahdian}]{borgs2007dynamics}
Borgs C, Chayes J, Immorlica N, Jain K, Etesami O, Mahdian M (2007) Dynamics of
  bid optimization in online advertisement auctions. \emph{Proceedings of the
  16th international conference on World Wide Web}, 531--540 (ACM).

\bibitem[{Bottou et~al.(2013)Bottou, Peters, Qui{\~n}onero-Candela, Charles,
  Chickering, Portugaly, Ray, Simard, \protect\BIBand{}
  Snelson}]{bottou2013counterfactual}
Bottou L, Peters J, Qui{\~n}onero-Candela J, Charles DX, Chickering DM,
  Portugaly E, Ray D, Simard P, Snelson E (2013) Counterfactual reasoning and
  learning systems: The example of computational advertising. \emph{The Journal
  of Machine Learning Research} 14(1):3207--3260.

\bibitem[{Brown et~al.(2015)Brown, Ganzfried, \protect\BIBand{}
  Sandholm}]{brown2015hierarchical}
Brown N, Ganzfried S, Sandholm T (2015) Hierarchical abstraction, distributed
  equilibrium computation, and post-processing, with application to a champion
  no-limit texas hold'em agent. \emph{Proceedings of the 2015 International
  Conference on Autonomous Agents and Multiagent Systems}, 7--15 (International
  Foundation for Autonomous Agents and Multiagent Systems).

\bibitem[{Brown et~al.(2019)Brown, Lerer, Gross, \protect\BIBand{}
  Sandholm}]{brown2018deep}
Brown N, Lerer A, Gross S, Sandholm T (2019) Deep counterfactual regret
  minimization. Chaudhuri K, Salakhutdinov R, eds., \emph{Proceedings of the
  36th International Conference on Machine Learning, {ICML} 2019, 9-15 June
  2019, Long Beach, California, {USA}}, volume~97 of \emph{Proceedings of
  Machine Learning Research}, 793--802 ({PMLR}).

\bibitem[{Brown \protect\BIBand{} Sandholm(2018)}]{brown2018superhuman}
Brown N, Sandholm T (2018) Superhuman ai for heads-up no-limit poker: Libratus
  beats top professionals. \emph{Science} 359(6374):418--424.

\bibitem[{Budish(2011)}]{budish2011combinatorial}
Budish E (2011) The combinatorial assignment problem: Approximate competitive
  equilibrium from equal incomes. \emph{Journal of Political Economy}
  119(6):1061--1103.

\bibitem[{Budish et~al.(2016)Budish, Kessler et~al.}]{budish2016bringing}
Budish E, Kessler JB, et~al. (2016) \emph{Bringing Real Market Participants'
  Real Preferences into the Lab: An Experiment that Changed the Course
  Allocation Mechanism at Wharton} (National Bureau of Economic Research).

\bibitem[{Caragiannis et~al.(2016)Caragiannis, Kurokawa, Moulin, Procaccia,
  Shah, \protect\BIBand{} Wang}]{caragiannis2016unreasonable}
Caragiannis I, Kurokawa D, Moulin H, Procaccia AD, Shah N, Wang J (2016) The
  unreasonable fairness of maximum {Nash} welfare. \emph{Proceedings of the
  2016 ACM Conference on Economics and Computation}, 305--322 (ACM).

\bibitem[{Chambolle \protect\BIBand{} Pock(2011)}]{chambolle2011first}
Chambolle A, Pock T (2011) A first-order primal-dual algorithm for convex
  problems with applications to imaging. \emph{Journal of mathematical imaging
  and vision} 40(1):120--145.

\bibitem[{Chambolle \protect\BIBand{} Pock(2016)}]{chambolle2016ergodic}
Chambolle A, Pock T (2016) On the ergodic convergence rates of a first-order
  primal--dual algorithm. \emph{Mathematical Programming} 159(1-2):253--287.

\bibitem[{Chawla et~al.(2017)Chawla, Hartline, \protect\BIBand{}
  Nekipelov}]{chawla2017mechanism}
Chawla S, Hartline JD, Nekipelov D (2017) Mechanism redesign. \emph{arXiv
  preprint arXiv:1708.04699} .

\bibitem[{Chen et~al.(2007)Chen, Ye, \protect\BIBand{} Zhang}]{chen2007note}
Chen L, Ye Y, Zhang J (2007) A note on equilibrium pricing as convex
  optimization. \emph{International Workshop on Web and Internet Economics},
  7--16 (Springer).

\bibitem[{Chen et~al.(2009)Chen, Dai, Du, \protect\BIBand{}
  Teng}]{chen2009settling}
Chen X, Dai D, Du Y, Teng SH (2009) Settling the complexity of {Arrow}-{Debreu}
  equilibria in markets with additively separable utilities. \emph{2009 50th
  Annual IEEE Symposium on Foundations of Computer Science}, 273--282 (IEEE).

\bibitem[{Chen \protect\BIBand{} Teng(2009)}]{chen2009spending}
Chen X, Teng SH (2009) Spending is not easier than trading: on the
  computational equivalence of {Fisher} and {Arrow}-{Debreu} equilibria.
  \emph{International Symposium on Algorithms and Computation}, 647--656
  (Springer).

\bibitem[{Chen et~al.(2013)Chen, Lai, Parkes, \protect\BIBand{}
  Procaccia}]{chen2013truth}
Chen Y, Lai JK, Parkes DC, Procaccia AD (2013) Truth, justice, and cake
  cutting. \emph{Games and Economic Behavior} 77(1):284--297.

\bibitem[{Cole et~al.(2017)Cole, Devanur, Gkatzelis, Jain, Mai, Vazirani,
  \protect\BIBand{} Yazdanbod}]{cole2017convex}
Cole R, Devanur NR, Gkatzelis V, Jain K, Mai T, Vazirani VV, Yazdanbod S (2017)
  Convex program duality, {Fisher} markets, and {Nash} social welfare.
  \emph{18th ACM Conference on Economics and Computation, EC 2017} (Association
  for Computing Machinery, Inc).

\bibitem[{Cole \protect\BIBand{} Gkatzelis(2018)}]{cole2018approximating}
Cole R, Gkatzelis V (2018) Approximating the {Nash} social welfare with
  indivisible items. \emph{SIAM Journal on Computing} 47(3):1211--1236.

\bibitem[{Conitzer et~al.(2019)Conitzer, Kroer, Panigrahi, Schrijvers, Sodomka,
  Stier-Moses, \protect\BIBand{} Wilkens}]{conitzer2018pacing}
Conitzer V, Kroer C, Panigrahi D, Schrijvers O, Sodomka E, Stier-Moses NE,
  Wilkens C (2019) Pacing equilibrium in first-price auction markets.
  \emph{Proceedings of the 2019 ACM Conference on Economics and Computation},
  587, EC ’19 (New York, NY, USA: Association for Computing Machinery).

\bibitem[{Conitzer et~al.(2021)Conitzer, Kroer, Sodomka, \protect\BIBand{}
  Stier-Moses}]{conitzer2018multiplicative}
Conitzer V, Kroer C, Sodomka E, Stier-Moses NE (2021) Multiplicative pacing
  equilibria in auction markets. \emph{Operations Research} Forthcoming.

\bibitem[{Diamond \protect\BIBand{} Boyd(2016)}]{cvxpy}
Diamond S, Boyd S (2016) {CVXPY}: A {P}ython-embedded modeling language for
  convex optimization. \emph{Journal of Machine Learning Research} 17(83):1--5.

\bibitem[{Domahidi et~al.(2013)Domahidi, Chu, \protect\BIBand{}
  Boyd}]{domahidi2013ecos}
Domahidi A, Chu E, Boyd S (2013) {ECOS}: {A}n {SOCP} solver for embedded
  systems. \emph{European Control Conference (ECC)}, 3071--3076.

\bibitem[{Edmonds \protect\BIBand{} Pruhs(2011)}]{edmonds2011cake}
Edmonds J, Pruhs K (2011) Cake cutting really is not a piece of cake. \emph{ACM
  Trans. Algorithms} 7(4):51.

\bibitem[{Eisenberg \protect\BIBand{} Gale(1959)}]{eisenberg1959consensus}
Eisenberg E, Gale D (1959) Consensus of subjective probabilities: The
  pari-mutuel method. \emph{The Annals of Mathematical Statistics}
  30(1):165--168.

\bibitem[{Feng et~al.(2018)Feng, Narasimhan, \protect\BIBand{}
  Parkes}]{feng2018deep}
Feng Z, Narasimhan H, Parkes DC (2018) Deep learning for revenue-optimal
  auctions with budgets. \emph{Proceedings of the 17th International Conference
  on Autonomous Agents and Multiagent Systems}, 354--362 (International
  Foundation for Autonomous Agents and Multiagent Systems).

\bibitem[{Ganzfried \protect\BIBand{} Sandholm(2014)}]{ganzfried2014potential}
Ganzfried S, Sandholm T (2014) Potential-aware imperfect-recall abstraction
  with earth mover's distance in imperfect-information games. \emph{Proceedings
  of the Twenty-Eighth AAAI Conference on Artificial Intelligence}, 682--690
  (AAAI Press).

\bibitem[{Gao \protect\BIBand{} Kroer(2021)}]{gao2020infinite}
Gao Y, Kroer C (2021) Infinite-dimensional {Fisher} markets: Equilibrium,
  duality and optimization. \emph{Proceedings of the AAAI Conference on
  Artificial Intelligence}.

\bibitem[{Garg et~al.(2020)Garg, Goel, \protect\BIBand{}
  Plaut}]{garg2018markets}
Garg N, Goel A, Plaut B (2020) Markets for public decision-making. \emph{Social
  Choice and Welfare} .

\bibitem[{Ge et~al.(2016)Ge, Lee, \protect\BIBand{} Ma}]{ge2016matrix}
Ge R, Lee JD, Ma T (2016) Matrix completion has no spurious local minimum.
  \emph{Advances in Neural Information Processing Systems}, 2973--2981.

\bibitem[{Gilpin et~al.(2007)Gilpin, Sandholm, \protect\BIBand{}
  S{\o}rensen}]{gilpin2007potential}
Gilpin A, Sandholm T, S{\o}rensen TB (2007) Potential-aware automated
  abstraction of sequential games, and holistic equilibrium analysis of texas
  hold'em poker. \emph{Proceedings of the 22nd national conference on
  Artificial intelligence-Volume 1}, 50--57 (AAAI Press).

\bibitem[{Goldberg et~al.(2001)Goldberg, Roeder, Gupta, \protect\BIBand{}
  Perkins}]{goldberg2001eigentaste}
Goldberg K, Roeder T, Gupta D, Perkins C (2001) Eigentaste: A constant time
  collaborative filtering algorithm. \emph{Information Retrieval}
  4(2):133--151.

\bibitem[{Goldman \protect\BIBand{} Procaccia(2015)}]{goldman2015spliddit}
Goldman J, Procaccia AD (2015) Spliddit: Unleashing fair division algorithms.
  \emph{ACM SIGecom Exchanges} 13(2):41--46.

\bibitem[{Golowich et~al.(2018)Golowich, Narasimhan, \protect\BIBand{}
  Parkes}]{golowich2018deep}
Golowich N, Narasimhan H, Parkes DC (2018) Deep learning for multi-facility
  location mechanism design. \emph{Proceedings of the 27th International Joint
  Conference on Artificial Intelligence}, 261–267, IJCAI'18 (AAAI Press).

\bibitem[{Harper \protect\BIBand{} Konstan(2015)}]{harper2016movielens}
Harper FM, Konstan JA (2015) The movielens datasets: History and context.
  \emph{ACM Trans. Interact. Intell. Syst.} 5(4).

\bibitem[{Hayek(1945)}]{hayek1945use}
Hayek FA (1945) The use of knowledge in society. \emph{The American economic
  review} 35(4):519--530.

\bibitem[{Hommes(2006)}]{hommes2006heterogeneous}
Hommes CH (2006) Heterogeneous agent models in economics and finance.
  \emph{Handbook of computational economics} 2:1109--1186.

\bibitem[{Johnson et~al.(2019)Johnson, Douze, \protect\BIBand{}
  J{\'e}gou}]{JDH17}
Johnson J, Douze M, J{\'e}gou H (2019) Billion-scale similarity search with
  {GPUs}. \emph{IEEE Transactions on Big Data} 1--1.

\bibitem[{Kantorovich(1975)}]{kantorovich1975mathematics}
Kantorovich L (1975) Mathematics in economics: achievements, difficulties,
  perspectives. Technical report, Nobel Prize Committee.

\bibitem[{Kantorovich(1960)}]{kantorovich1960mathematical}
Kantorovich LV (1960) Mathematical methods of organizing and planning
  production. \emph{Management Science} 6(4):366--422.

\bibitem[{Klemperer(2018)}]{klemperer2018auctions}
Klemperer P (2018) \emph{Auctions: theory and practice} (Princeton University
  Press).

\bibitem[{Kroer \protect\BIBand{} Sandholm(2014)}]{kroer2014extensive}
Kroer C, Sandholm T (2014) Extensive-form game abstraction with bounds.
  \emph{Proceedings of the fifteenth ACM conference on Economics and
  computation}, 621--638 (ACM).

\bibitem[{Kroer \protect\BIBand{} Sandholm(2016)}]{kroer2016imperfect}
Kroer C, Sandholm T (2016) Imperfect-recall abstractions with bounds in games.
  \emph{Proceedings of the 2016 ACM Conference on Economics and Computation},
  459--476 (ACM).

\bibitem[{Kroer \protect\BIBand{} Sandholm(2018)}]{kroer2018unified}
Kroer C, Sandholm T (2018) A unified framework for extensive-form game
  abstraction with bounds. \emph{Advances in Neural Information Processing
  Systems}, 612--623.

\bibitem[{Lanctot et~al.(2012)Lanctot, Gibson, Burch, Zinkevich,
  \protect\BIBand{} Bowling}]{lanctot2012no}
Lanctot M, Gibson R, Burch N, Zinkevich M, Bowling M (2012) No-regret learning
  in extensive-form games with imperfect recall. \emph{Proceedings of the 29th
  International Coference on International Conference on Machine Learning},
  1035--1042 (Omnipress).

\bibitem[{Leme \protect\BIBand{} Wong(2020)}]{leme2020computing}
Leme RP, Wong SCw (2020) Computing {Walrasian} equilibria: Fast algorithms and
  structural properties. \emph{Mathematical Programming} 179(1):343--384.

\bibitem[{Lerer \protect\BIBand{} Peysakhovich(2017)}]{lerer2017maintaining}
Lerer A, Peysakhovich A (2017) Maintaining cooperation in complex social
  dilemmas using deep reinforcement learning. \emph{arXiv preprint
  arXiv:1707.01068} .

\bibitem[{Lerer \protect\BIBand{} Peysakhovich(2019)}]{lerer2018learning}
Lerer A, Peysakhovich A (2019) Learning social conventions in {Markov} games.
  \emph{Conference on Artificial Intelligence, Ethics, and Society} .

\bibitem[{Ljungqvist \protect\BIBand{} Sargent(2018)}]{ljungqvist2018recursive}
Ljungqvist L, Sargent TJ (2018) \emph{Recursive macroeconomic theory} (MIT
  press).

\bibitem[{Lu \protect\BIBand{} Boutilier(2015)}]{lu2015value}
Lu T, Boutilier C (2015) Value-directed compression of large-scale assignment
  problems. \emph{Twenty-Ninth AAAI Conference on Artificial Intelligence}.

\bibitem[{Megiddo \protect\BIBand{} Vazirani(2007)}]{megiddo2007continuity}
Megiddo N, Vazirani VV (2007) Continuity properties of equilibrium prices and
  allocations in linear {Fisher} markets. \emph{International Workshop on Web
  and Internet Economics}, 362--367 (Springer).

\bibitem[{Morav{\v{c}}{\'\i}k et~al.(2017)Morav{\v{c}}{\'\i}k, Schmid, Burch,
  Lis{\`y}, Morrill, Bard, Davis, Waugh, Johanson, \protect\BIBand{}
  Bowling}]{moravvcik2017deepstack}
Morav{\v{c}}{\'\i}k M, Schmid M, Burch N, Lis{\`y} V, Morrill D, Bard N, Davis
  T, Waugh K, Johanson M, Bowling M (2017) Deepstack: Expert-level artificial
  intelligence in heads-up no-limit poker. \emph{Science} 356(6337):508--513.

\bibitem[{Murray et~al.(2020)Murray, Kroer, Peysakhovich, \protect\BIBand{}
  Shah}]{murray2020robust}
Murray R, Kroer C, Peysakhovich A, Shah P (2020) Robust market equilibria with
  uncertain preferences. \emph{Proceedings of the AAAI Conference on Artificial
  Intelligence}, volume~34, 2192--2199 ({AAAI} Press).

\bibitem[{Negishi(1960)}]{negishi1960welfare}
Negishi T (1960) Welfare economics and existence of an equilibrium for a
  competitive economy. \emph{Metroeconomica} 12(2-3):92--97.

\bibitem[{Nesterov \protect\BIBand{} Shikhman(2018)}]{nesterov2018computation}
Nesterov Y, Shikhman V (2018) Computation of {Fisher}--gale equilibrium by
  auction. \emph{Journal of the Operations Research Society of China}
  6(3):349--389.

\bibitem[{Nisan et~al.(2007)Nisan, Roughgarden, Tardos, \protect\BIBand{}
  Vazirani}]{nisan2007algorithmic}
Nisan N, Roughgarden T, Tardos E, Vazirani VV (2007) \emph{Algorithmic game
  theory} (Cambridge University Press).

\bibitem[{O'Donoghue et~al.(2016)O'Donoghue, Chu, Parikh, \protect\BIBand{}
  Boyd}]{donoghue2016conic}
O'Donoghue B, Chu E, Parikh N, Boyd S (2016) Conic optimization via operator
  splitting and homogeneous self-dual embedding. \emph{Journal of Optimization
  Theory and Applications} 169(3):1042--1068.

\bibitem[{Othman et~al.(2016)Othman, Papadimitriou, \protect\BIBand{}
  Rubinstein}]{othman2016complexity}
Othman A, Papadimitriou C, Rubinstein A (2016) The complexity of fairness
  through equilibrium. \emph{ACM Trans. Econ. Comput.} 4(4).

\bibitem[{Paszke et~al.(2017)Paszke, Gross, Chintala, Chanan, Yang, DeVito,
  Lin, Desmaison, Antiga, \protect\BIBand{} Lerer}]{paszke2017automatic}
Paszke A, Gross S, Chintala S, Chanan G, Yang E, DeVito Z, Lin Z, Desmaison A,
  Antiga L, Lerer A (2017) Automatic differentiation in {PyTorch}.

\bibitem[{Peng \protect\BIBand{} Sandholm(2016)}]{peng2016scalable}
Peng F, Sandholm T (2016) Scalable segment abstraction method for advertising
  campaign admission and inventory allocation optimization. \emph{Proceedings
  of the Twenty-Fifth International Joint Conference on Artificial
  Intelligence}, 655--661 (AAAI Press).

\bibitem[{Peysakhovich et~al.(2019)Peysakhovich, Kroer, \protect\BIBand{}
  Lerer}]{peysakhovich2019robust}
Peysakhovich A, Kroer C, Lerer A (2019) Robust multi-agent counterfactual
  prediction. \emph{Advances in Neural Information Processing Systems},
  3077--3087.

\bibitem[{Peysakhovich \protect\BIBand{}
  Ugander(2017)}]{peysakhovich2017learning}
Peysakhovich A, Ugander J (2017) Learning context-dependent preferences from
  raw data. \emph{Proceedings of the 12th workshop on the Economics of
  Networks, Systems and Computation}, 8 (ACM).

\bibitem[{Porter et~al.(2003)Porter, Rassenti, Roopnarine, \protect\BIBand{}
  Smith}]{porter2003combinatorial}
Porter D, Rassenti S, Roopnarine A, Smith V (2003) Combinatorial auction
  design. \emph{Proceedings of the National Academy of Sciences}
  100(19):11153--11157.

\bibitem[{Recht(2011)}]{recht2011simpler}
Recht B (2011) A simpler approach to matrix completion. \emph{Journal of
  Machine Learning Research} 12:3413--3430.

\bibitem[{Roth(2002)}]{roth2002economist}
Roth AE (2002) The economist as engineer: Game theory, experimentation, and
  computation as tools for design economics. \emph{Econometrica}
  70(4):1341--1378.

\bibitem[{Ruiz et~al.(2020)Ruiz, Athey, \protect\BIBand{}
  Blei}]{ruiz2017shopper}
Ruiz FJ, Athey S, Blei DM (2020) Shopper: A probabilistic model of consumer
  choice with substitutes and complements. \emph{Annals of Applied Statistics}
  14(1):1--27.

\bibitem[{Scarf(1967)}]{scarf1967computation}
Scarf H (1967) On the computation of equilibrium prices. Cowles Foundation
  Discussion Papers 232, Cowles Foundation for Research in Economics, Yale
  University.

\bibitem[{Shmyrev(2009)}]{shmyrev2009algorithm}
Shmyrev VI (2009) An algorithm for finding equilibrium in the linear exchange
  model with fixed budgets. \emph{Journal of Applied and Industrial
  Mathematics} 3(4):505.

\bibitem[{Varian et~al.(1974)}]{varian1974equity}
Varian HR, et~al. (1974) Equity, envy, and efficiency. \emph{Journal of
  Economic Theory} 9(1):63--91.

\bibitem[{Walsh et~al.(2010)Walsh, Boutilier, Sandholm, Shields, Nemhauser,
  \protect\BIBand{} Parkes}]{walsh2010automated}
Walsh WE, Boutilier C, Sandholm T, Shields R, Nemhauser G, Parkes DC (2010)
  Automated channel abstraction for advertising auctions. \emph{Proceedings of
  the Twenty-Fourth AAAI Conference on Artificial Intelligence}, 887--894 (AAAI
  Press).

\bibitem[{Waugh et~al.(2009)Waugh, Zinkevich, Johanson, Kan, Schnizlein,
  \protect\BIBand{} Bowling}]{waugh2009practical}
Waugh K, Zinkevich M, Johanson M, Kan M, Schnizlein D, Bowling MH (2009) A
  practical use of imperfect recall. \emph{Proceedings of the Eighth Symposium
  on Abstraction, Reformulation, and Approximation (SARA'09)}.

\end{thebibliography}
\newpage
\appendix
\section{Appendix}

\subsection{Linear properties}
\label{sec:lin props}

\citet{balkanski2014simultaneous} defined \emph{linear properties} as properties of allocations that can be represented as an index set $G$ of finitely-many linear constraints:
\[
  \sum_{i,j} A^k_{ij} u_i(x_j) \geq c^k \quad \forall k \in G
\]

such that $\sum_{ij}A^k_{ij} \leq 1$ for all $k\in G$. It is easy to show that linear properties are approximately preserved under $\Vhat$. Here $G$ is allowed to be an arbitrary finite-sized index set of constraints (so for example, $G$ could be the set of inequalities required for no envy).

Consider any $k\in G$, we have:

\begin{align*}
  c^k &\leq \sum_{i,j} A^k_{ij} \hat u_i(x_j)
   = \sum_{i,j} A^k_{ij} (u_i(x_j) - \Delta v_i \cdot x_j)
   = \sum_{i,j} A^k_{ij} u_i(x_j) - \sum_{i,j} A^k_{ij} \Delta v_i \cdot x_j
   \leq \max_{i,j} \hat u_i(x_j)  
\end{align*}
Now we can use the sum-to-less-than-one property of $A^k$ to bound the error:
\begin{align*}
   \sum_{i,j} A^k_{ij} \Delta v_i \cdot x_j
   \leq \max_{i,j} \Delta v_i \cdot x_j
   \leq \max_{i} \| \Delta v_i \|_1
   = \| \Delta V \|_{1,\infty}
\end{align*}

Thus we can use this general bound on linear properties to show that envy and equitability, as well as any other linear property, is preserved up to an additive error $\| \Delta V \|_{1,\infty}$. \citet{balkanski2014simultaneous} also mention proportionality as a linear property, however here one has to be slightly careful in our setting. Because we have two different market instances $V, V'$, it is not generally the case that there exists a \emph{single} set of linear-property constants $A,c$ representing proportionality as a linear constraint, since the $c$ is in general different for $V$ and $V'$. Thus our result on MMS and proportionality being preserved does not follow directly from bounding linear properties. One has to further relate the RHS constants $c$ and $c'$ under the two instances.

\subsection{Proof of Theorem~\ref{thm:lqfs}}

\begin{proof}
  We first show the result for a per-buyer property.
  Let $X'\in \mathcal X^k$, $i$ be any buyer for which the quantification holders under $\hat V$, $i'$ be arbitrary,
  and $i''$ the minimizer of $u_i(x_{i''}')$, we then have
  \begin{multline*}
    \hat u_i(\hat x_{i}) \geq \lambda_1^k  \hat u_i(\hat x_{i'}) +  \lambda_2^k \min_{i'} \hat u_i(\hat x_{i'}') 
    \Leftrightarrow\\
    \sum_j \hat v_{ij}(\hat x_{ij} - \lambda_1^k \hat x_{i'j} -  \lambda_2^k  x_{i''j}') + QL (B_i - \hat p_j(\hat x_i - \lambda_1^k \hat x_{i'} - \lambda_2^kx_{i''}'))  \geq 0,
  \end{multline*}
  where $QL \in \{ 0,1 \}$ is a constant which denotes whether $u_i$ is quasilinear. This is equivalent to
  \begin{multline*}
    \sum_j v_{ij}(\hat x_{ij} - \lambda_1^k \hat x_{i'j} -  \lambda_2^k  x_{i''j}') + QL (B_i - \hat p_j(\hat x_i - \lambda_1^k \hat x_{i'} - \lambda_2^kx_{i''}'))  
    \geq 
    -\sum_j \Delta v_{ij}(\hat x_{ij} - \lambda_1^k \hat x_{i'j} -  \lambda_2^k  x_{i''j}').
  \end{multline*}
  We can rewrite this in terms of $u_i$ to get
  \begin{align*}
    u_i(\hat x_i) - (\lambda_1^k  u_i(\hat x_{i'}) +  \lambda_2^k \min_{i'} u_i(\hat x_{i'}'))  &\geq -\sum_j \Delta v_{ij}(\hat x_{ij} - \lambda_1^k \hat x_{i'j} -  \lambda_2^k  x_{i'j}'), \\
      &\geq -\sum_j |\Delta v_{ij}|\max(\hat x_{ij}, \lambda_1^k \hat x_{i'j} +  \lambda_2^k  x_{i'j}'), \\
      &\geq - \|\Delta V\|_{1,\infty}\max(1, \lambda_1^k +  \lambda_2^k  ).
  \end{align*}
  Since $i'$ was arbitrary this implies
  \begin{align*}
    u_i(\hat x_i) - (\lambda_1^k \max_{i'}  u_i(\hat x_{i'}) +  \lambda_2^k \min_{i'} u_i(\hat x_{i'}'))  \geq - \|\Delta V\|_{1,\infty}\max(1, \lambda_1^k +  \lambda_2^k  ).
  \end{align*}

  Thus the property holds approximately for buyer $i$ under $V$, as the theorem requires.

  Now consider a global property $k$. By definition we have that for any alternative allocation $X' \in \mathcal X^k$
  \begin{align*}
    0 &\leq \sum_{i\in [n]}\beta_i \hat v_i\cdot \xhat_i - \sum_{i\in [n]}\beta_i \hat v_i\cdot \xhat'_i \\
    & = \sum_{i\in [n]}\sum_{j \in [m]}\beta_i \hat v_i\cdot (\xhat_i - \xhat'_i) \\
    & = \sum_{i\in [n]}\sum_{j \in [m]}\beta_i (v_i - \Delta v_i)\cdot (\xhat_i - \xhat'_i) \\
    & \leq \sum_{i\in [n]}\sum_{j \in [m]}\beta_i v_i \cdot (\xhat_i - \xhat'_i)
      + \sum_{i\in [n]}\sum_{j \in [m]}\beta_i \|\Delta v_i\|_1.
  \end{align*}
\end{proof}

\subsection{Proofs of Other Theorems}

\begin{proof}[Proof of Theorem~\ref{thm:nsw}: Bounded NSW]
  We have
  \[
    \text{NSW}(X^*) = \prod_{i \in [n]} v_i\cdot x^*_i
    = \prod_{i \in [n]} (\hat v_i\cdot x^*_i + \Delta v_i \cdot x^*_i)
    = \prod_{i \in [n]} \hat v_i\cdot x^*_i\bigg(1 + \frac{\Delta v_i \cdot x^*_i}{\hat v_i\cdot x^*_i}\bigg).
  \]
  Now we can use the fact that $X^*$ is feasible under $\Vhat$ to note that its value must be less than that of $\Xhat$:
  \[
    \text{NSW}(X^*) 
    \leq \prod_{i \in [n]} \hat v_i\cdot \xhat_i\bigg(1 + \frac{\Delta v_i \cdot x^*_i}{\hat v_i\cdot x^*_i}\bigg)
    \leq \prod_{i \in [n]} \hat v_i\cdot \xhat_i\bigg(1 + \frac{\|\Delta v_i\|_1}{\hat v_i\cdot x^*_i}\bigg)
    = \text{NSW}(\Xhat) \prod_{i \in [n]} \bigg(1 + \frac{\|\Delta v_i\|_1}{\hat v_i\cdot x^*_i}\bigg).
  \]
\end{proof}

\begin{proof}[Proof of Theorem~\ref{thm:rrme}: RRME weakly improves buyer utilities]

  First we note the following simple fact which holds for any buyer $i$: the utility of $i$ under $X^r$ is weakly greater than that under $X$, i.e., $v_i\cdot x^r_i \geq v_i\cdot x_i $. This is because (a subset of) $X^r$ is a market equilibrium in the recursive market for the corresponding $\tilde i$, and a buyer is guaranteed to get at least the value of the budget-proportional allocation in any market equilibrium.

  The Pareto gap is the value of a linear program that maximizes social welfare (minus current welfare) subject to the constraint that each buyer is weakly better off. Since utilities are greater in $X^r$ this is a strictly more constrained problem than for $X$, and thus the value, i.e., the Pareto gap, is lower.

  Since we keep prices the same the optimal bundle $x_i^*$ for each buyer remains the same for $(p,X)$ and $(p,X^r)$. Thus the only affected part of regret is the negative term, which is weakly greater under $X^r$ since utilities are weakly greater.

  That the MMS gap is smaller and NSW greater follows directly from each buyer having weakly-higher utility.
\end{proof}


\subsection{Convergence Rates of RME Computation}

The average iterates $\bar x=\sum_{t=1}^Tx^t$ and $\bar p=\sum_{t=1}^Tp^t$ converge to a saddle-point solution at a rate of $O(\frac{(L+L_f)D_x + LD_p}{T})$, where $D_x,D_p$ is the maximum value of the $\ell_2$ norm over $\cal X$ and $P$ respectively. Note that we upper bound the price vector $p$ by the sum of budgets $\|B\|_1$, and each allocation vector $x_i$ by the supply of each item. This does not change the set of equilibria, as these conditions are all guaranteed to be satisfied in equilibrium.

We now show how to instantiate our SPP~\eqref{eq:cp_spp} in terms of the generic SPP~\eqref{eq:pd_spp}. We have that $\cal X$ is the product of allocation vectors $\times_{i\in [n]}\{x_i : 0 \leq x_i \leq s\}$ over the buyers, and $P=\{p : 0 \leq p \leq \|B\|_1/s\}$ is the set of price vectors. $K$ is an $nm\times m$ matrix representing $\sum_{i\in [n]}p\cdot x_i$, i.e., with a $1$ in each row/column-pair $r,j$ when the row $r$ corresponds to a variable that denotes assigning item $j$ to some bidder. The norm $L$ of $K$ is $\sqrt{n}$, which is achieved by any pair $x,p$ such that $p_j=1$ for some $j$, and $x_{ij}=\frac{1}{\sqrt{n}}$, with all other entries $0$, or by setting $p=\frac{1}{\sqrt{m}},x=\frac{1}{\sqrt{mn}}$.

The logarithm in the objective function presents a challenge because the gradient is unbounded near zero; this problem can be addressed by noting that agents are always guaranteed to receive their MMS value in equilibrium, and thus we can add the additional constraint $v_i\cdot x_i \geq \textrm{MMS}_i$ to the feasible set of each agent, thereby bounding the gradient difference $L_f$ by $\max_{i\in [n], j \in [m]} \frac{v_{ij}B_i}{\textrm{MMS}_i}$. In practice we found that utilities did not approach zero and thus this projection was unnecessary.

The value of $D_x$ is $n\|s\|_2^2$, the maximum is achieved by setting $x_{ij}=s_j$ for all $i,j$.
The value of $D_p$ is $\sum_{j\in [m]}(\|B\|_1/s_j)^2$, the maximum is achieved by setting each price at its upper bound.

Putting together this construction gives an algorithm that converges to a saddle point of \eqref{eq:cp_spp} at a rate of
$$
\max_{p \in P} \mathcal L (\bar x, p) - \min_{x\in \cal X}\mathcal L (x, \bar p)
\leq
O\bigg(\frac{(\sqrt{n} + \max_{i,j} \frac{v_{ij}B_i}{\textrm{MMS}_i})n\|s\|_2^2 + \sqrt{n} \sum_{j\in [m]}(\|B\|_1/s_j)^2}{T}\bigg).
$$

Now we see that if we solve a clustering $(\{C_{\tilde i}\}, \{C_{\tilde j}\})$ we get the following convergence rate:
\begin{align}
O\bigg(\frac{(\sqrt{\hat n} + \max_{i\in [\hat n], j \in [\hat m]} \frac{\hat v_{ij}\hat B_i}{\textrm{MMS}_i})\hat n\|\hat s\|_2^2 + \sqrt{\hat n} \sum_{j\in [\hat m]}(\|\hat B\|_1/\hat s_j)^2}{T}\bigg).
  \label{eq:clustered_convergence_rate}
\end{align}

Furthermore, each iteration is of order $O(\hat n \hat m)$ rather than $O(nm)$. To get a sense of the savings in running time, say that supply and budgets are both $1$. In that case the original problem has runtime cost $O(n^{7/2}m^2)$ (since the third term usually dominates), and thus compressing to $10\%$ problem size leads to a runtime decrease of factor $100$ to $3162$ depending on whether the decrease in instance size is primarily due to fewer items or buyers.

If we apply recursive lift then we have to solve $k$ market equilibrium problems corresponding to each cluster of buyers. The cost of computing the recursive allocations is the sum of the costs of computing each recursive market equilibrium. This can be expressed as a sum over terms similar to \eqref{eq:clustered_convergence_rate}, but where $\hat n, \hat m$ represent the size of the given recursive instance.
Again we assume that budgets and supplies are $1$ to get a sense of runtime savings. We can then use the fact that the union of the buyer clusters is $[n]$ to bound the runtime as $O(n\hat n^{5/2}\hat m^2)$, where $\hat n, \hat m$ corresponds to the size of the recursive market equilibrium which maximizes $\hat n^{5/2}\hat m^2$. Thus even with the additional cost of computing the RME allocation the runtime cost savings are on the order of $100$ to $316$ for the case where each recursive market is a tenth of the size of the original market.


\subsection{Quasilinear variant}

In this section we describe how our results can be instantiated for quasilinear (QL) utilities.

Each buyer $i$ has a utility that now depends on how much leftover money they have. We let $\delta_i$ denote the amount of leftover money.  The utility of buyer $i$ is now $$u_i^{QL}(x_i, \delta_i) = v_i \cdot x_i + \delta_i$$. 

Given prices $p \in \R_+^m$ for the goods, a \emph{QL demand} for buyer $i$ is $$d_i^{QL} (p) = \lbrace \arg \max_{x, \delta : x\cdot p + \delta \leq B_i} v_i \cdot x + \delta \rbrace.$$ The QL demand can be set valued but the maximum reachable utility given a price vector is unique.

Similar to the non-QL case, a market equilibrium can be computed via convex programming (EG-QL):
\begin{equation*}
\begin{aligned}
& \underset{x}{\text{max}}
& & \sum_{i} B_i \text{log} (v_i \cdot x_i + \delta_i) - \delta_i \\
& \text{subject to}
& & \sum_{i} x_{ij} \leq s_{j} \text{ } \forall j
\end{aligned}
\end{equation*}
Note that this convex program is almost identical to the standard EG program; the only difference is the introduction of $\delta_i$ for each buyer $i$ (this convex program for QL utilities was first given by \citet{chen2007note} and later by \citet{cole2017convex}).

Now consider the utility function $\hat u_i^{QL}(x_i, \delta_i) = \hat v_i \cdot x_i + \delta_i$. Since the $\delta_i$ term does not depend on $\hat V$, we have
\[u_i^{QL}(x_i, \delta_i) = v_i \cdot x_i + \delta_i = (\hat v_i + \Delta v_i) \cdot x_i + \delta_i = \hat u_i^{QL}(x_i, \delta_i) + \Delta v_i \cdot x_i,\]
and thus the abstracted utility function behaves exactly the same as in the non-QL case with respect to the abstraction error $\Delta v_i$. We then have that

\begin{theorem}
  Envy, regret, MMS, proportional share, and strong Pareto improvement are LQFS properties under QL utilities, and are thus bounded by $\|\Delta V\|_{1,\infty}$. Weighted social welfare is LQFS and is thus bounded by $\|\hat w \|_1\|\Delta V\|_{1,\infty}$ for weights $\hat w$.
\end{theorem}
The proof is identical to the non-QL case after noting that $\Delta v_i$ behaves exactly the same in both cases. 


\end{document}